\tikzset{anonymous/.style={draw, circle, inner sep=0.7mm}}
\tikzset{forbidden/.style={draw, rectangle, inner sep=1mm}}
\tikzset{necessary/.style={draw, isosceles triangle, isosceles triangle apex angle=60, shape border rotate=90, minimum width=2.5mm, inner sep=0.5mm}}
\newcommand{\medcirc}{{\Circle}}
\newcommand{\decisionProblem}[3]{%
\begin{center}
\fbox{%
\begin{minipage}{.95\linewidth}%
#1
\begin{itemize}[topsep=1mm,itemsep=1mm]
\addtolength{\leftskip}{11mm}
\item[Input:] #2
\item[Question:] #3
\end{itemize}%
\end{minipage}%
}%
\end{center}
}
\renewcommand{\leq}{\leqslant}
\renewcommand{\geq}{\geqslant}
\renewcommand{\phi}{\varphi}
\renewcommand{\theta}{\vartheta}
\newcommand{\size}[1]{\ensuremath{{\lvert #1 \rvert}}}
\newcommand{\calT}{\mathcal{T}}
\newcommand{\vertices}[1]{\ensuremath{\operatorname{V}(#1)}}
\newcommand{\edges}[1]{\ensuremath{\operatorname{E}(#1)}}
\newcommand{\prob}[1]{\textsc{#1}}
\newcommand{\probSS}{\prob{Secure Set}}
\newcommand{\SSFN}{\prob{Secure Set\textsuperscript{FN}}}
\newcommand{\SSFNC}{\prob{Secure Set\textsuperscript{FNC}}}
\newcommand{\DA}{\prob{Defensive Alliance}}
\newcommand{\DAF}{\prob{Defensive Alliance\textsuperscript{F}}}
\newcommand{\DAFN}{\prob{Defensive Alliance\textsuperscript{FN}}}
\newcommand{\DAFNC}{\prob{Defensive Alliance\textsuperscript{FNC}}}
\newcommand{\EDA}{\prob{Exact Defensive Alliance}}
\newcommand{\EDAF}{\prob{Exact Defensive Alliance\textsuperscript{F}}}
\newcommand{\EDAFN}{\prob{Exact Defensive Alliance\textsuperscript{FN}}}
\newcommand{\EDAFNC}{\prob{Exact Defensive Alliance\textsuperscript{FNC}}}
\newcommand{\MMO}{\prob{Minimum Maximum Outdegree}}
\renewcommand{\O}{\ensuremath{\mathcal{O}}}
\newcommand{\complexityclass}[1]{\ensuremath{\mathsf{#1}}}
\newcommand{\NP}{\complexityclass{NP}}
\newcommand{\FPT}{\complexityclass{FPT}}
\newcommand{\Wone}{\complexityclass{W[1]}}
\newcommand{\XP}{\complexityclass{XP}}
\newcommand{\sFNC}[1]{\ensuremath{\sigma_{#1}^\text{FNC}}}
\newcommand{\tFNC}{\ensuremath{\tau^\text{FNC}}}
\newcommand{\sFN}[1]{\ensuremath{\sigma_{#1}^\text{FN}}}
\newcommand{\tFN}{\ensuremath{\tau^\text{FN}}}
\newcommand{\tF}{\ensuremath{\tau^\text{F}}}
\title{Defensive Alliances in Graphs of Bounded Treewidth\footnote{This work was supported by the Austrian Science Fund (FWF) projects P25607 and Y698.}}
\theoremstyle{plain}
\newtheorem{observation}[theorem]{Observation}
\author{Bernhard Bliem}
\author{Stefan Woltran}
\affil{%
Institute of Information Systems 184/2\\
TU Wien\\
Favoritenstrasse 9--11, 1040 Vienna, Austria\\
\texttt{[bliem,woltran]@dbai.tuwien.ac.at}%
}
\authorrunning{B. Bliem and S. Woltran} 
\subjclass{F.2.2 Nonnumerical Algorithms and Problems}
\keywords{defensive alliance, alliances in graphs, treewidth, complexity analysis, parameterized complexity}
\theoremstyle{plain}
\newtheorem{theorem}{Theorem}
\newtheorem{lemma}[theorem]{Lemma}
\newtheorem{corollary}[theorem]{Corollary}
\newtheorem{observation}[theorem]{Observation}
\theoremstyle{definition}
\newtheorem{definition}[theorem]{Definition}
\theoremstyle{remark}
  \edef\thetheorem{\expandafter\noexpand\thesection\@thmcountersep\@thmcounter{theorem}}
\author{Bernhard Bliem and Stefan Woltran}
\begin{document}

\maketitle

\begin{abstract}
A set $S$ of vertices of a graph is a defensive alliance if, for each element 
of $S$, the majority of its neighbors is in $S$.
The problem of finding a defensive alliance of minimum size in a given graph is 
\NP-hard and there are polynomial-time algorithms if certain parameters are 
bounded by a fixed constant.
In particular, fixed-parameter tractability results have been obtained for some 
structural parameters such as the vertex cover number.
However, for the parameter treewidth, the question of whether the problem is 
FPT has remained open.
This is unfortunate because treewidth is perhaps the most prominent graph 
parameter and has proven successful for many problems.
In this work, we give a negative answer by showing that the problem is 
\Wone-hard when parameterized by treewidth, which rules out FPT algorithms 
under common assumptions.
This is surprising since the problem is known to be FPT when parameterized by 
solution size and ``subset problems'' that satisfy this property usually tend 
to be FPT for bounded treewidth as well.
We prove \Wone-hardness by using techniques from a recent hardness result for 
the problem of finding so-called secure sets in a graph.
%
\end{abstract}

\section{Introduction}

The objective of many problems that can be modeled as graphs is finding a group
of vertices that together satisfy some property.
In this respect, one of the concepts that has been quite extensively studied is 
the notion of a defensive alliance~
\cite{kristiansen2004alliances,kristiansen2002introduction}, which is a set of 
vertices such that for each element $v$ at least half of its neighbors are also 
in the alliance.
The name ``defensive alliance'' stems from the intuition that the neighbors of 
an element $v$ that are also in the alliance can help out in case $v$ is 
attacked by its other neighbors.

Notions like this can be applied to finding groups of nations, companies or 
individuals that depend on each other, but also to more abstract situations 
like finding groups of websites that form 
communities~\cite{DBLP:journals/computer/FlakeLGC02}.
Another possible application for defensive alliances are computer networks, 
where a defensive alliance represents computers that can provide a certain 
desired resource; any computer in an alliance can then, with the help of its 
neighbors that are also in the alliance, allow access to this resource from all 
of its neighbors simultaneously~\cite{DBLP:journals/combinatorics/HaynesHH03}.

Several variants of defensive alliances have also been studied.
The papers that originally proposed defensive alliances also propose related 
notions like offensive and powerful alliances.
An offensive alliance is a set $S$ of vertices such that every \emph{neighbor} 
of an element of $S$ has at least half of its neighbors in $S$, and a powerful 
alliance is both a defensive and an offensive alliance.
Any of these alliances is called \emph{global} if it is at the same time a 
dominating set.
Another variant is to consider alliances $S$ where, for each vertex $v \in S$, 
the difference between the number of neighbors of $v$ in $S$ and the number of 
other neighbors of $v$ is at most a given integer~\cite{ShafiqueD2003}.
For comprehensive overviews of different kinds of alliances in graphs, we refer 
to the surveys~\cite{yero2013defensive, fernau2014survey}.

The \DA{} problem can be specified as follows:
Given a graph $G$ and an integer $k$, is there a defensive alliance $S$
in $G$ such that $1\leq \size{S} \leq k$?
It is known that this problem is 
\NP-complete~\cite{Jamieson2007,jamieson2009algorithmic}, and so is the 
corresponding problem for global defensive alliances~\cite{CamiBDD06}.
However, if we restrict ourselves to trees, \DA{} becomes trivial and in fact 
the corresponding problems for several non-trivial variants become solvable in 
linear time~\cite{Jamieson2007}.

There has also been some work on the parameterized complexity of alliance 
problems.
In particular, determining whether a defensive, offensive and powerful alliance 
of a given (maximum) size exists is fixed-parameter tractable when 
parameterized by the solution size~\cite{sofsem:FernauR07, Enciso2009}.
Also structural parameters have been considered to some extent.
Recently, \cite{dam:KiyomiO17} proved that these problems can be solved in 
polynomial time if the clique-width of the instances is bounded by a constant.
The authors also provide an FPT algorithm when the parameter is the size of the 
smallest vertex cover.
Moreover, \cite{Enciso2009} showed that the decision problems for defensive 
alliances and global defensive alliances are fixed-parameter tractable when 
parameterized by the combination of treewidth and maximum degree.
Despite these advances regarding, the question of whether or not \DA{} 
parameterized by treewidth is fixed-parameter tractable has so far remained 
open.

Treewidth~\cite{DBLP:journals/jct/RobertsonS84,DBLP:journals/actaC/Bodlaender93,DBLP:conf/sofsem/Bodlaender05}
is one of the most extensively studied structural parameters and indicates how 
close a graph is to being a tree.
It is particularly attractive because many hard problems become tractable on 
instances of bounded treewidth, and in several practical applications it has 
been observed that the considered problem instances exhibit small 
treewidth~\cite{DBLP:journals/actaC/Bodlaender93,iandc:Thorup98,dam:KornaiT92}.
Hence it would be very appealing to obtain an FPT algorithm for the \DA{} 
problem using this parameter.

The main contribution of this paper is a parameterized complexity analysis of 
\DA{} with treewidth as the parameter.
The question of whether or not this problem is fixed-parameter tractable when 
parameterized by treewidth has so far been unresolved~\cite{dam:KiyomiO17}.
In the current chapter, we provide a negative answer to this question:
We show that the problem is hard for the class \Wone{}, which rules out 
fixed-parameter tractable algorithms under commonly held complexity-theoretic 
assumptions.
This result is rather surprising for two reasons:
First, the problem is tractable on trees~\cite{Jamieson2007} and quite often 
problems that become easy on trees turn out to become easy on graphs of bounded 
treewidth.%
%
\footnote{To be precise, \cite{Jamieson2007,ho2009rooted,ChangCHKLW12} show 
that some variants of \DA{} are tractable on trees, since \DA{} on trees is 
trivial.}
Second, this puts \DA{} among the very few ``subset problems'' that are 
fixed-parameter tractable w.r.t.\ solution size but not w.r.t.\ treewidth.
Problems with this kind of behavior are rather rare, as observed
by Dom et~al.~\cite{DBLP:conf/iwpec/DomLSV08}.

We show \Wone-hardness of the problem by first reducing a problem known to be 
\Wone-hard to a variant of \DA{}, where vertices can be forced to be in or out 
of every solution, and pairs of vertices can be specified to indicate that 
every solution must contain exactly one element of each such pair.
In order to prove the desired complexity result, we then successively reduce 
this variant to the standard \DA{} problem.

At the same time, we show \Wone{}-hardness for the exact variants of these 
problems, where we are interested in defensive alliances \emph{exactly} of a 
certain size.
Note that a set may lose the property of being a defensive alliance by adding 
or removing elements, so these are non-trivial problem variants.
Indeed, exact versions of alliance problems have also been mentioned as 
interesting variants in~\cite{sofsem:FernauR07} because some algorithms that 
work for the non-exact case stop to work for the exact case:
A graph has a defensive alliance of size at most $k$ if and only if it has a 
\emph{connected} defensive alliance of size at most $k$ since every component 
of a defensive alliance is itself a defensive alliance.
Algorithms that exploit this by looking only for connected solutions hence fail 
for the exact versions.
(In fact, we will also study the complexity of other problem variants where 
this connectedness property does not apply even in the non-exact case.)

This paper is organized as follows:
We first introduce our problems of interest and describe preliminary concepts 
in Section~\ref{sec:background}.
In Section~\ref{sec:da-wone-treewidth} we then show that the \DA{} problem is 
\Wone-hard when parameterized by treewidth.
Section~\ref{sec:conclusion} concludes the paper with a discussion.

The reductions in the current work are based on ideas used in the 
paper~\cite{DBLP:conf/wg/BliemW15}, which analyzed the complexity of a problem 
related to \DA{} called \probSS{}.
That paper has since been extended by a \Wone-hardness proof for the \probSS{} 
problem parameterized by treewidth~\cite{corr:BliemW14}.
In the current paper, we take up the ideas behind this hardness proof and apply 
them to the \DA{} problem.
Due to the different nature of these two problems, the reductions and proofs 
for \probSS{} do not work directly for \DA{} but require substantial 
modifications.

\section{Background}
\label{sec:background}

All graphs are undirected and simple unless stated otherwise.
We denote the set of vertices and edges of a graph $G$ by $\vertices{G}$ and $\edges{G}$,
respectively.
We denote an undirected edge between vertices $u$ and $v$ as $(u,v)$ or
equivalently $(v,u)$.
It will be clear from the context whether an edge $(u,v)$ is directed or
undirected.
Given a graph $G$, the \emph{open neighborhood} of a vertex $v \in \vertices{G}$,
denoted by $N_G(v)$, is the set of all vertices adjacent to $v$, and
$N_G[v] = N_G(v) \cup \{v\}$ is called the \emph{closed neighborhood} of $v$.
If it is clear from the context which graph is meant, we write $N(\cdot)$ and
$N[\cdot]$ instead of $N_G(\cdot)$ and $N_G[\cdot]$, respectively.

The intuition behind defensive alliances is the following:
If we consider a set $S$ of vertices as ``good'' vertices and all other 
vertices as ``bad'' ones, then $S$ being a defensive alliance means that each element of $S$ 
has at least as many ``good'' neighbors as ``bad'' neighbors.

\begin{definition}
Given a graph $G$, a set $S \subseteq \vertices{G}$ is a \emph{defensive 
alliance in $G$} if for each $v \in S$ it holds that
$\size{N[v] \cap S} \geq \size{N[v] \setminus S}$.
\end{definition}
We often write ``$S$ is a defensive alliance'' instead of ``$S$ is a defensive
alliance in $G$'' if it is clear from the context which graph is meant.
By definition, the empty set is a defensive alliance in any graph. Thus, in the
following decision problems we ask for a defensive alliances of size at
least~1.
\begin{figure}
\centering
\begin{tikzpicture}
\node [draw,circle] (a) at (0.5,0) {$a$};
\node [draw,circle] (b) at (1.5,0) {$b$};
\node (c) at (0,-1) {$c$};
\node (d) at (1,-1) {$d$};
\node (e) at (2,-1) {$e$};

\draw (a) -- (b);
\draw (c) -- (a) -- (d) -- (b) -- (e);
\draw (c) -- (d) -- (e);
\draw [bend right] (c) to (e);
\end{tikzpicture}
\caption{A graph with a minimum non-empty defensive alliance indicated by circled vertices}
\label{fig:da-example}
\end{figure}
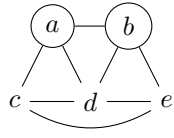
For example, in Figure~\ref{fig:da-example}, the set $S = \{a,b\}$ is 
a defensive alliance as $\size{N[v] \cap S} \geq \size{N[v] \setminus S}$ 
holds for each $v \in S$.
Note that, for instance, $\{a,d\}$ is no defensive alliance since $d$ is 
attacked by three vertices but only has the neighbor $a$ to help defend itself.

Next we introduce several variants of \DA{} that we require in our proofs.
The problem \DAF{} generalizes \DA{} by designating some ``forbidden'' 
vertices that may never be in any solution.
This variant can be formalized as follows:
\decisionProblem{\DAF}{A graph $G$, an integer $k$ and a set $V_\square \subseteq \vertices{G}$}{%
Does there exist a set $S \subseteq \vertices{G} \setminus V_\square$ with $1 \leq \size{S} \leq k$ that is a defensive alliance?%
}
\DAFN{} is a further generalization that, in addition, allows ``necessary'' vertices to be specified that must occur in every solution.
\decisionProblem{\DAFN}{A graph $G$, an integer $k$, a set $V_\square \subseteq \vertices{G}$ and a set $V_\triangle \subseteq \vertices{G}$}{%
Does there exist a set $S \subseteq \vertices{G} \setminus V_\square$ with $V_\triangle \subseteq S$ and $1 \leq \size{S} \leq k$ that is a defensive alliance?%
}
Finally, we introduce the generalization \DAFNC{}.
Here we may state pairs of ``complementary'' vertices where each solution must contain exactly one element of every such pair.
\decisionProblem{\DAFNC}{A graph $G$, an integer $k$, a set $V_\square \subseteq \vertices{G}$, a set $V_\triangle \subseteq \vertices{G}$ and a set $C \subseteq \vertices{G}^2$}{%
Does there exist a set $S \subseteq \vertices{G} \setminus V_\square$ with $V_\triangle \subseteq S$ and $1 \leq \size{S} \leq k$ that is a defensive alliance and, for each pair $(a,b) \in C$, contains either $a$ or $b$?%
}
For our results on the parameter treewidth, we need a way to represent the 
structure of a \DAFNC{} instance by a graph that augments $G$ with the 
information in $C$:
\begin{definition}
Let $I$ be a \DAFNC\ instance, let $G$ be the graph in $I$ and let $C$ the set 
of complementary vertex pairs in $I$.
By the \emph{primal graph} of $I$ we mean the undirected graph $G'$ with
$\vertices{G'} = \vertices{G}$ and
$\edges{G'} = \edges{G} \cup C$.
\end{definition}
When we speak of the treewidth of an instance of \DA{}, \DAF{} or \DAFN{}, 
we mean the treewidth of the graph in the instance.
For an instance of \DAFNC{}, we mean the treewidth of the primal graph.

While the \DA{} problem asks for defensive alliances of size \emph{at most} $k$, we also consider the \EDA{} problem that concerns defensive alliances of size \emph{exactly} $k$.
Analogously, we also define exact versions of the three generalizations of 
\DA{} presented above.

In this paper's figures, we often indicate necessary vertices by means of a triangular node shape, and forbidden vertices by means of either a square node shape or a superscript square in the node name.
If two vertices are complementary, we often express this in the figures by putting a $\neq$ sign between them.
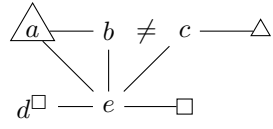
\begin{figure}
\centering
\begin{tikzpicture}
\node [draw,necessary] (a) at (0,0) {$a$};
\node (b) at (1,0) {$b$};
\node at (1.5,0) {$\neq$};
\node (c) at (2,0) {$c$};
\node [necessary] (g) at (3,0) {};
\node (d) at (0,-1) {$d^\square$};
\node (e) at (1,-1) {$e$};
\node [forbidden] (f) at (2,-1) {};

\draw (a) -- (b) -- (e) -- (a);
\draw (d) -- (e) -- (f);
\draw (e) -- (c);
\draw (c) -- (g);
\end{tikzpicture}
\caption{Illustration of forbidden, necessary and complementary vertices}
\label{fig:da-example2}
\end{figure}
For example, in Figure~\ref{fig:da-example2}, the vertices $b$ and $c$ are complementary and occur in no solution together; $a$ and the ``anonymous'' vertex adjacent to $c$ are necessary and occur in every solution; $d^\square$ and the ``anonymous'' vertex adjacent to $e$ are forbidden and occur in no solution.
In this figure, the unique minimum non-empty defensive alliance satisfying the conditions of forbidden, necessary and complementary vertices consists of $a$, $b$ and the ``anonymous'' necessary vertex adjacent to $c$.

The following terminology will be helpful:
We often use the terms \emph{attackers} and \emph{defenders} of an element $v$ of a defensive alliance candidate $S$.
By these we mean the sets $N[v] \setminus S$ and $N[v] \cap S$, respectively.
To show that an element $v$ of a defensive alliance candidate $S$ is \emph{not} a counterexample
to $S$ being a solution, we sometimes employ the notion of a \emph{defense} of
$v$ w.r.t.\ $S$, which assigns to each attacker a dedicated defender:
If we are able to find an injective mapping $\mu: N[v] \setminus S \to N[v] \cap S$, then obviously $\size{N[v] \setminus S} \leq \size{N[v] \cap S}$, and we call $\mu$ a \emph{defense} of $v$ w.r.t.\ $S$.
Given such a defense $\mu$, we say that a defender $d$ \emph{repels} an attack on $v$ by an attacker $a$ whenever $\mu(a) = d$.
Consequentially, when we say that a set of defenders $D$ \emph{can repel} attacks on $v$ from a set of attackers $A$, we mean that there is a defense that assigns to each element of $A$ a dedicated defender in $D$.

To warm up, we make some easy observations that we will use in our proofs.
First, for every set $R$ consisting of a majority of neighbors of a vertex $v$,
whenever $v$ is in a defensive alliance, also some element of $R$ must be in 
it:
\begin{observation}
\label{obs:one-in-s}
Let $S$ be a defensive alliance in a graph,
let $v \in S$ and
let $R \subseteq N(v)$.
If $\size{R} > \frac{1}{2} N[v]$,
then $S$ contains an element of $R$.
\end{observation}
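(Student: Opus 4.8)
The plan is to prove Observation~\ref{obs:one-in-s} by a direct counting argument via contradiction, exploiting the defensive alliance inequality at the vertex $v$. First I would assume, for contradiction, that $S$ contains no element of $R$; that is, $R \cap S = \emptyset$, equivalently $R \subseteq N(v) \setminus S \subseteq N[v] \setminus S$ (note $v \in S$, so $v$ itself is not among the attackers). Since $S$ is a defensive alliance and $v \in S$, the defining inequality gives $\size{N[v] \cap S} \geq \size{N[v] \setminus S}$.

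Next I would combine this with the partition $N[v] = (N[v] \cap S) \,\dot\cup\, (N[v] \setminus S)$, which yields $\size{N[v] \setminus S} \leq \frac{1}{2}\size{N[v]}$. On the other hand, the assumption $R \subseteq N[v] \setminus S$ together with the hypothesis $\size{R} > \frac{1}{2}\size{N[v]}$ forces $\size{N[v] \setminus S} \geq \size{R} > \frac{1}{2}\size{N[v]}$, contradicting the previous line. Hence $R \cap S \neq \emptyset$, i.e.\ $S$ contains an element of $R$.

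The argument is essentially arithmetic, so I do not expect a genuine obstacle; the only point requiring a little care is the bookkeeping around closed versus open neighborhoods — specifically, that $v \in N[v]$ and $v \in S$, so $v$ contributes to the defender side and never to the attacker side, which is exactly why $R \subseteq N(v)$ (open neighborhood) can be safely enlarged to $R \subseteq N[v] \setminus S$. I would also note in passing that the statement as written should read $\size{R} > \tfrac{1}{2}\size{N[v]}$; I will treat $\size{N[v]}$ as the intended quantity throughout. No case distinction on the parity of $\size{N[v]}$ is needed, since all the inequalities are strict/non-strict in a way that composes cleanly.
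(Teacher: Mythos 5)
Your proof is correct and follows essentially the same contradiction-by-counting argument as the paper: both derive $\size{N[v]\setminus S} > \tfrac{1}{2}\size{N[v]}$ from $R \cap S = \emptyset$ and play it off against the defensive-alliance inequality at $v$. Your remark that the statement should read $\size{R} > \tfrac{1}{2}\size{N[v]}$ is also right; the paper's own proof uses exactly that reading.
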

\begin{proof}
Suppose that $\size{R} > \frac{1}{2} \size{N[v]}$ and
$S$ contains no element of $R$.
Since all elements of $R$ attack $v$,
$\size{N[v] \setminus S} > \frac{1}{2} \size{N[v]}$.
Hence
$2 \size{N[v] \setminus S} > \size{N[v]} =
\size{N[v] \cap S} + \size{N[v] \setminus S}$, and we obtain
the contradiction
$\size{N[v] \setminus S} > \size{N[v] \cap S}$.
\end{proof}

Next, if one half of the neighbors of an element $v$ of a defensive alliance 
attacks $v$, then the other half of the neighbors must be in the defensive 
alliance:
\begin{observation}
\label{obs:all-in-s}
Let $S$ be a defensive alliance in a graph,
let $v \in S$ and
let $N(v)$ be partitioned into two equal-sized sets $A,D$.
If $A \cap S = \emptyset$, then $D \subseteq S$.
\end{observation}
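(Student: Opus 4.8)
The plan is to unfold the defensive alliance inequality $\size{N[v]\cap S}\geq\size{N[v]\setminus S}$ for the specific vertex $v$, using the partition $N(v)=A\uplus D$ together with the hypotheses $v\in S$ and $A\cap S=\emptyset$, and then to read off the conclusion by an integrality argument.

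First I would compute the two sides of the inequality explicitly. Since $v\in S$ and $A\cap S=\emptyset$, the closed neighborhood $N[v]=\{v\}\cup A\cup D$ splits as $N[v]\cap S=\{v\}\cup(D\cap S)$ and $N[v]\setminus S=A\cup(D\setminus S)$, where the unions are disjoint. Hence $\size{N[v]\cap S}=1+\size{D\cap S}$ and $\size{N[v]\setminus S}=\size{A}+\size{D}-\size{D\cap S}$. Plugging these into the defensive alliance condition and using $\size{A}=\size{D}$ gives $1+\size{D\cap S}\geq 2\size{D}-\size{D\cap S}$, i.e.\ $2\size{D\cap S}\geq 2\size{D}-1$.

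Then I would finish by the integrality observation: $\size{D\cap S}\geq\size{D}-\tfrac12$ forces $\size{D\cap S}\geq\size{D}$, and since $D\cap S\subseteq D$ this means $D\cap S=D$, i.e.\ $D\subseteq S$, as claimed.

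This argument is completely routine, so I do not expect any genuine obstacle; the only thing to be careful about is bookkeeping the disjointness of the sets when partitioning $N[v]$ (in particular remembering to count $v$ itself on the $S$ side and to use that $A\setminus S=A$). One could alternatively derive the statement from Observation~\ref{obs:one-in-s} applied iteratively, but the direct computation above is shorter and self-contained.
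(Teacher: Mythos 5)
Your proof is correct and follows essentially the same route as the paper: both simply unfold the defensive-alliance inequality at $v$ using the partition $N[v]=\{v\}\uplus A\uplus D$ and the hypotheses $v\in S$, $A\cap S=\emptyset$. The only cosmetic difference is that the paper argues by contradiction (a missing element of $D$ forces $\size{N[v]\setminus S}\geq\size{D\cap S}+2$), whereas you compute directly and close the gap with a parity/integrality step; both are the same elementary counting argument.
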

\begin{proof}
Since $N(v)$ is partitioned into $A$ and $D$ such that
$A \cap S = \emptyset$,
we get
$N(v) \cap S = D \cap S$.
If some element of $D$ is not in $S$, then
$D \cap S \subset D$ and
$A \subset N[v] \setminus S$.
By $\size{D} = \size{A}$, we get
$\size{D \cap S} + 2 \leq \size{N[v] \setminus S}$.
From
$\size{N[v] \cap S} = 1 + \size{N(v) \cap S} = 1 + \size{D \cap S}$
we now obtain the contradiction
$\size{N[v] \cap S} < \size{N[v] \setminus S}$.
\end{proof}
In particular, if half of the neighbors of $v$ are forbidden, then $v$ can only
be in a defensive alliance if all non-forbidden neighbors are also in the 
defensive alliance.

Finally, we recapitulate some background from complexity theory.
In parameterized complexity 
theory~\cite{downey1999parameterized,flum2006parameterized,b:Niedermeier06,Cygan15}, 
we study problems that consist not only of an input and a question, but also of 
some parameter of the input that is represented as an integer.
A problem is in the class \FPT{} (``fixed-parameter tractable'') if it can be 
solved in time $f(k) \cdot n^c$, where $n$ is the input size, $k$ is the 
parameter, $f$ is a computable function that only depends on $k$, and $c$ is a 
constant that does not depend on $k$ or $n$.
We call such an algorithm an \emph{FPT algorithm}, and we call it
\emph{fixed-parameter linear} if $c=1$.
Similarly, a problem is in the class \XP{} (``slice-wise polynomial'') if it 
can be solved in time $f(k) \cdot n^{g(k)}$, where $f$ and $g$ are computable 
functions.
Note that here the degree of the polynomial may depend on $k$, so such 
algorithms are generally slower than FPT algorithms.
For the class \Wone{} it holds that $\FPT \subseteq \Wone \subseteq \XP$,
and it is commonly believed that the inclusions are proper, i.e.,
\Wone{}-hard problems do not admit FPT algorithms.
\Wone{}-hardness of a problem can be shown using FPT reductions, 
which are reductions that run in FPT time and produce an equivalent instance
whose parameter is bounded by a function of the original parameter.

For problems whose input can be represented as a graph, one important parameter 
is \emph{treewidth}, which is a structural parameter that, roughly speaking, 
measures the ``tree-likeness'' of a graph. It is defined by means of tree 
decompositions, originally introduced in~\cite{DBLP:journals/jct/RobertsonS84}.
The intuition behind tree decompositions is to obtain a tree from a 
(potentially cyclic) graph by subsuming multiple vertices under one node and 
thereby isolating the parts responsible for cyclicity.
\begin{definition}
\label{def:td}
A \emph{tree decomposition} of a graph $G$
is a pair $\calT = (T,\chi)$ where $T$ is a (rooted) tree and
$\chi : \vertices{T} \to 2^{\vertices{G}}$ assigns to each node of $T$ a set of vertices of $G$
(called the node's \emph{bag}), such that the following conditions are met:
\begin{enumerate}
\item For every vertex $v \in \vertices{G}$, there is a node $t \in \vertices{T}$ such that $v 
\in \chi(t)$.
\item For every edge $(u,v) \in \edges{G}$, there is a node $t \in \vertices{T}$ such that
$\{u,v\} \subseteq \chi(t)$.
\item For every $v \in \vertices{G}$, the subtree of $T$ induced by $\{t \in \vertices{T} \mid 
v \in \chi(t)\}$ is connected.
\end{enumerate}
We call $\max_{t \in \vertices{T}} \lvert \chi(t) \rvert - 1$ the \emph{width} of $\calT$.
The \emph{treewidth} of a graph is the minimum width over all its tree 
decompositions.
\end{definition}
In general, constructing an optimal tree decomposition (i.e., a tree
decomposition with minimum width) is
intractable~\cite{Arnborg:1987:CFE:37170.37183}.
However, the problem is solvable in linear time on graphs of bounded treewidth
(specifically in time
$w^{\O(w^3)} \cdot n$, where $w$ is the
treewidth)~\cite{siamcomp:Bodlaender96} and
there are also heuristics that offer good performance in
practice~\cite{DBLP:conf/micai/DermakuGGMMS08,DBLP:journals/iandc/BodlaenderK10}.

In this paper we will consider so-called \emph{nice} tree decompositions:
\begin{definition}
\label{def:nice-td}
A tree decomposition $\mathcal{T} = (T,\chi)$ is \emph{nice} if each node $t 
\in \vertices{T}$ is of one of the following types:
\begin{enumerate}
 \item Leaf node: The node $t$ has no child nodes.
 \item Introduce node: The node $t$ has exactly one child node $t'$ such that 
 $\chi(t) \setminus \chi(t')$ consists of exactly one element.
 \item Forget node: The node $t$ has exactly one child node $t'$ such that 
 $\chi(t') \setminus \chi(t)$ consists of exactly one element.
 \item Join node: The node $t$ has exactly two child nodes $t_1$ and $t_2$ with 
 $\chi(t) = \chi(t_1) = \chi(t_2)$.
\end{enumerate}
Additionally, the bags of the root and the leaves of $T$ are empty.
\end{definition}
A tree decomposition of width $w$ for a graph with $n$ vertices can be 
transformed into a nice one of width $w$ with $\O(wn)$ nodes in fixed-parameter 
linear time~\cite{kloks1994treewidth}.

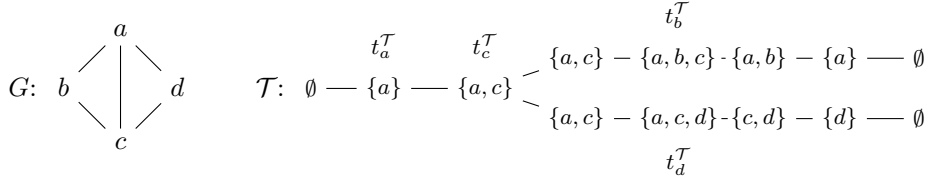
\begin{figure}[t]%
\centering
$G$:
\begin{tikzpicture}[scale=0.75, baseline=(d.base)]
\node (a) at (0,1) {$a$};
\node (b) at (-1,0) {$b$};
\node (c) at (0,-1) {$c$};
\node (d) at (1,0) {$d$};
\draw (a) -- (b) -- (c) -- (d) -- (a) -- (c);
\end{tikzpicture}%
\hspace{2em}
$\calT$:
\begin{tikzpicture}[level/.style={sibling distance=8mm,level distance=12mm},
level 1/.style={level distance=8mm},
grow'=right, font=\footnotesize, anchor=west, growth parent anchor=west,
baseline=(root.base)]
\node (root) {$\emptyset$}
	child {node [label=above:$t^\calT_a$] {$\{a\}$}
		child {node [label=above:$t^\calT_c$] {$\{a,c\}$}
			child {node {$\{a,c\}$}
				child {node [label=above:$t^\calT_b$] {$\{a,b,c\}$}
					child {node {$\{a,b\}$}
						child {node {$\{a\}$}
							child {node {$\emptyset$}}
						}
					}
				}
			}
			child {node {$\{a,c\}$}
				child {node [label=below:$t^\calT_d$] {$\{a,c,d\}$}
					child {node {$\{c,d\}$}
						child {node {$\{d\}$}
							child {node {$\emptyset$}}
						}
					}
				}
			}
		}
	}
	;
\end{tikzpicture}%
\caption{A graph $G$ and a nice tree decomposition $\calT$ of $G$ rooted at the leftmost node}
\label{fig:td-example}
\end{figure}

For any tree decomposition $\calT$ and an element $v$ of some bag in $\calT$, 
we use the notation $t_v^\calT$ to denote the unique ``topmost node'' whose bag 
contains $v$ (i.e., $t_v^\calT$ does not have a parent whose bag contains $v$).
Figure~\ref{fig:td-example} depicts a graph and a nice tree decomposition, 
where we also illustrate the $t^\calT_v$ notation.

When we speak of the treewidth of an instance of \DA{}, \DAF{}, \DAFN{}, \EDA{}, \EDAF{} or
\EDAFN{}, we mean the treewidth of the graph in the instance.
For an instance of \DAFNC{} or \EDAFNC{}, we mean the treewidth of the primal graph.

\section{Hardness of Defensive Alliance Parameterized by Treewidth}
\label{sec:da-wone-treewidth}

In this section, we prove the following theorem:
\begin{theorem}
\label{thm:da-variants-wone}
The following problems are all $\Wone$-hard when parameterized by treewidth:
\DA{},
\EDA{},
\DAF{},
\EDAF{},
\DAFN{},
\EDAFN{},
\DAFNC{}, and
\EDAFNC{}.
\end{theorem}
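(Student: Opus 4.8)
The plan is to prove \Wone-hardness by a chain of FPT reductions, starting from a problem already known to be \Wone-hard with respect to treewidth and working gradually toward plain \DA{}. The natural starting point, following the companion paper on \probSS{}, is \MMO{} (\prob{Minimum Maximum Outdegree}): given an undirected edge-weighted graph and an integer $r$, orient every edge so that for each vertex the total weight of outgoing edges is at most $r$. This problem is \Wone-hard parameterized by treewidth, and its orientation structure is well suited to being encoded by defensive-alliance constraints: each edge becomes a small gadget in which ``which endpoint the edge points away from'' corresponds to ``which of a complementary pair of vertices is in the solution,'' and the weight $r$ is reflected in how many attackers a vertex-gadget can tolerate. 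So the first step is to reduce \MMO{} to \EDAFNC{} (and simultaneously to \DAFNC{}), using forbidden vertices to pin down the gadget skeletons, necessary vertices to force the vertex-gadgets into the solution, and complementary pairs to model the binary orientation choice; the primal graph must be shown to have treewidth bounded by a function of the treewidth of the \MMO{} instance, which is where the gadgets have to be designed carefully so that each one only interacts locally with a bag.

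The second phase is a sequence of reductions that successively strip away the extra features FNC, then N, then F, i.e. \EDAFNC{}/\DAFNC{} $\to$ \EDAFN{}/\DAFN{} $\to$ \EDAF{}/\DAF{} $\to$ \EDA{}/\DA{}, each one FPT and each one increasing treewidth by at most an additive constant. To eliminate a complementary pair $(a,b)$, the standard trick is to merge $a$ and $b$ into the neighborhood of a new private vertex that is forced (via a pendant gadget of necessary/forbidden vertices) to require exactly one of them as a defender; to eliminate necessary vertices, attach to each $v \in V_\triangle$ a gadget of forbidden ``attacker'' vertices large enough that $v$ can only survive in a defensive alliance if it is included together with enough of its original neighbors, effectively forcing $v$ into every solution; and to eliminate forbidden vertices, attach to each $v \in V_\square$ a large clique or star of ``ballast'' vertices so that including $v$ would blow the size budget $k$ past its limit, making $v$ unusable. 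Because the exact variants (\EDA{} etc.) are handled in parallel, one must be careful that every gadget contributes a predictable, fixed number of vertices to any solution, so that a size-$k$ constraint is preserved under the reduction; this is the reason the paper insists these are non-trivial variants and tracks them throughout.

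The main obstacle I expect is not any single reduction but the interaction between three competing demands in the \MMO{}-to-\EDAFNC{} step: (i) the gadgets must faithfully encode arbitrary integer edge weights — which in graph terms means blowing up a weighted edge into many parallel paths or a weighted count of attackers — while (ii) keeping the treewidth of the primal graph bounded by a function of the \MMO{} treewidth, even though naive weight-encoding gadgets can easily create large cliques or long-range dependencies, and (iii) making the defensive-alliance inequality $\size{N[v]\cap S} \ge \size{N[v]\setminus S}$ behave exactly like the outdegree-weight threshold ``$\le r$,'' which requires balancing defenders against attackers at every vertex-gadget with no slack. Getting all three simultaneously is delicate because the defensive-alliance condition is ``local but symmetric'' (every vertex of $S$ must be defended), unlike secure sets where one reasons about subsets, so the gadgets from the \probSS{} paper cannot be reused verbatim; in particular the vertex-gadgets probably need an internal supply of dedicated defenders whose size is tuned to $r$ and to the vertex's \MMO{}-degree, and one has to verify that these internal defenders are themselves automatically defended (e.g. by being part of a clique within the gadget) so they do not propagate new constraints. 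Once the \MMO{} reduction and its treewidth bound are in place, the remaining feature-removal reductions are routine ballast-and-pendant constructions, and Observations~\ref{obs:one-in-s} and~\ref{obs:all-in-s} are exactly the tools needed to verify correctness of each gadget. Transitivity of FPT reductions then yields \Wone-hardness of all eight problems.
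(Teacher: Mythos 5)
Your overall architecture is exactly the paper's: reduce \MMO{} to \EDAFNC{}/\DAFNC{}, then peel off the features in the order C, N, F, with each step an FPT reduction that increases treewidth by an additive constant, and you correctly anticipate both the role of the helper vertices tuned to $r$ in the \MMO{} gadget and the ``ballast'' construction (adding $2k$ pendant-style neighbors) for eliminating forbidden vertices. However, there is a genuine gap in your step for eliminating \emph{necessary} vertices. You propose to ``attach to each $v \in V_\triangle$ a gadget of forbidden attacker vertices large enough that $v$ can only survive in a defensive alliance if it is included together with enough of its original neighbors, effectively forcing $v$ into every solution.'' Forbidden attackers hanging off $v$ only constrain what must happen \emph{if} $v \in S$; they are powerless over a solution $S$ that simply avoids $v$'s gadget entirely. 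Since a defensive alliance need not be connected and need not touch $v$ at all, no purely local gadget at $v$ can force $v$ into \emph{every} solution. This is precisely the difficulty the paper's construction is built around: it introduces a global propagation chain (the sets $A_v$ linked via an ordering $\preceq$ of all non-forbidden vertices, together with the primed copies $v'$ and the $g_v, h_v$ vertices) so that membership of \emph{any} vertex in a solution propagates along the entire chain and drags in all of $V_\triangle$; the argument is anchored on the requirement that solutions are nonempty.

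Two further consequences of this follow that your plan does not account for. First, the chain makes the treewidth bound nontrivial: an arbitrary ordering of the vertices can create long-range edges that blow up the width, so the paper must choose $\preceq$ by a post-order traversal of a tree decomposition of the original graph (recording vertices at their topmost bags), which is where the FPT-time computation of a tree decomposition enters the reduction itself. Second, your gadget for eliminating a complementary pair $(a,b)$ --- a forced private vertex adjacent to representatives of $a$ and $b$ --- only enforces ``at least one of $a,b$''; the ``at most one'' direction in the paper comes from a counting argument against the size budget $k'$, which requires attaching chains of length $\Theta(n^2)$ to each side of the pair so that taking both sides provably exceeds the budget. These are fixable within your framework, but as stated the plan for the N-elimination step would fail, and the C-elimination step is only half-enforced.
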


We prove hardness by providing a chain of FPT reductions from
a \Wone-hard problem to the problems under consideration.
Under the widely held assumption that $\FPT \neq \Wone$, this rules out
fixed-parameter tractable algorithms for these problems.

\subsection{Hardness of Defensive Alliance with Forbidden, Necessary and Complementary Vertices}

To show \Wone-hardness of \DAFNC{}, we reduce from the following 
problem~\cite{DBLP:journals/dam/AsahiroMO11}, which is known to be 
\Wone-hard~\cite{DBLP:journals/corr/abs-1107-1177} parameterized by the 
treewidth of the graph:
\decisionProblem{\MMO}{%
A graph $G$,
an edge weighting $w: \edges{G} \to \mathbb{N}^+$ given in unary
and a positive integer $r$}{%
Is there an orientation of the edges of $G$ such that, for each $v \in 
\vertices{G}$, the sum of the weights of outgoing edges from $v$ is at most 
$r$?%
}

\begin{lemma}
\DAFNC{} and \EDAFNC{}, both parameterized by the treewidth of the primal 
graph, are \Wone-hard.
\end{lemma}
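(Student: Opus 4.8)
The goal is an FPT reduction from \MMO{} (parameterized by treewidth) to \DAFNC{} and \EDAFNC{} (parameterized by the treewidth of the primal graph). The plan is to take an instance $(G,w,r)$ of \MMO{} and build a graph $H$ together with forbidden, necessary and complementary vertices so that a legal orientation of $G$ corresponds to a defensive alliance in $H$ of a prescribed size. The central idea is that orienting an edge $(u,v)$ means ``paying'' its weight $w(u,v)$ at whichever endpoint it points away from; to encode this, each edge of $G$ will be expanded into a small gadget that contains a complementary pair of vertices, one ``charged to $u$'', one ``charged to $v$'', so that exactly one of the two is forced into every solution and thereby records the chosen direction. Around each original vertex $v$ of $G$ I would attach a ``budget gadget'': a necessary vertex $v^\star$ (so it lies in every defensive alliance) whose neighborhood is padded so that it can tolerate exactly $r$ units of attack. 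Each unit of weight of an edge oriented away from $v$ contributes one attacker to $v^\star$ (realized by $w(u,v)$ parallel attack vertices, which is why the weighting is given in unary — this keeps the construction polynomial), while the padding supplies a fixed pool of defenders. The defensive-alliance inequality at $v^\star$ then says precisely that the total outgoing weight at $v$ is at most $r$, which is the \MMO{} condition.

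Concretely, I would proceed in the following steps. First, fix for each vertex $v \in \vertices{G}$ a necessary vertex $v^\star$ and a set of fresh forbidden ``attacker'' vertices and fresh necessary ``defender'' vertices whose cardinalities are chosen (using $r$ and $\sum_{e \ni v} w(e)$) so that $v^\star$ is defended in a solution iff the outgoing weight at $v$ is $\le r$; using Observation~\ref{obs:all-in-s} one can make the accounting exact. Second, for each edge $e = (u,v)$, introduce a gadget with two complementary vertices $e_u, e_v$ (so every solution picks exactly one), wire $e_u$ to the $w(e)$ attacker copies of $u^\star$ and $e_v$ to the $w(e)$ attacker copies of $v^\star$, and use forbidden/necessary padding inside the gadget so that $e_u, e_v$ themselves are trivially safe regardless of which one is chosen. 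Third, set the size bound $k$ to the exact total number of necessary vertices plus one endpoint per edge gadget; since all of $V_\triangle$ plus exactly one vertex of each complementary pair is forced, every feasible set has this exact size, which simultaneously handles the ``at most $k$'' problem \DAFNC{} and the ``exactly $k$'' problem \EDAFNC{}. Fourth, argue both directions of correctness: a legal orientation yields the obvious solution, and conversely any defensive alliance of the prescribed size induces an orientation (edge $e$ points away from $u$ iff $e_u \in S$) whose outdegree-weight bound at each $v$ follows from the inequality at $v^\star$. Finally, bound the treewidth of the primal graph: take an optimal tree decomposition of $G$, and into the bag of each node insert the $\star$-vertices of the (constantly many) original vertices in that bag, then locally expand each node to accommodate the $\O(1)$-sized gadget of each edge and its padding; this increases the width only by a constant factor, so treewidth of the primal graph is $\O(\mathrm{tw}(G))$, as required for an FPT reduction.

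The step I expect to be the main obstacle is getting the budget gadget at $v^\star$ to be \emph{exact} — i.e., ensuring that the defensive-alliance inequality at $v^\star$ holds if and only if the outgoing weight is at most $r$, not merely ``at most $r + O(1)$''. The difficulty is that the defenders available to $v^\star$ include not only the dedicated padding but also the chosen endpoints $e_v$ of incident edge gadgets and possibly $v^\star$'s counterparts, so one must carefully count closed neighborhoods and possibly force parity (by adding one extra forbidden neighbor) so that $\lceil \cdot \rceil$ rounding does not give a free unit of slack. A secondary subtlety is making sure the newly introduced attacker/defender vertices and the complementary-pair vertices are themselves never counterexamples to $S$ being a defensive alliance — this is handled by making attackers forbidden (hence never in $S$, so they impose no constraint) and giving every necessary vertex enough necessary neighbors of its own, but it has to be checked uniformly across all gadget vertices. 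Once the exact budget gadget is pinned down, the treewidth bound and the two-way correctness argument are routine.
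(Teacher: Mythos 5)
Your overall architecture matches the paper's: reduce from \MMO{}, record each edge's orientation via complementary vertices (kept polynomial by the unary weights), give each original vertex a necessary status plus a padding gadget calibrated by $r$, force the solution size exactly so that the non-exact and exact variants coincide, and bound the treewidth by locally expanding a tree decomposition of $G$. However, the step you yourself flag as the main obstacle --- making the budget at each vertex \emph{exactly} $r$, with no slack from rounding and no interference from the edge-gadget endpoints that land in $S$ --- is precisely the heart of the construction, and your proposal leaves it unresolved. The paper's resolution is a doubling trick: for each edge $(u,v)$ it attaches to $u$ both a set $V_{uv}$ of $w(u,v)$ \emph{complementary} vertices and a twin set $V_{uv}^\square$ of $w(u,v)$ \emph{forbidden} vertices (and symmetrically at $v$). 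The forbidden twins always attack. Thus a unit of incoming weight contributes one attacker and one defender (net zero), while a unit of outgoing weight contributes two attackers; with $2r-1$ necessary helpers plus the vertex itself supplying exactly $2r$ base defenders, the defensive-alliance inequality becomes $2r + w_{\mathrm{in}} \geq w_{\mathrm{in}} + 2w_{\mathrm{out}}$, i.e.\ $w_{\mathrm{out}} \leq r$ exactly --- no parity fix needed. Your worry about the chosen endpoints $e_v$ ``also defending'' is dissolved because those defenders are exactly cancelled by their forbidden twins.

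Two further points. First, your edge gadget uses a \emph{single} complementary pair $e_u, e_v$ that is supposed to control $w(e)$ attacker copies; you give no mechanism by which membership of $e_u$ in $S$ propagates to the $w(e)$ copies, and the \DAFNC{} primitives do not provide one for free. The paper instead makes the $w(u,v)$ copies themselves the complementary vertices, chained as $(u^v_i, v^u_i)$ and $(v^u_i, u^v_{i+1})$, so that exactly one of the two whole sets $V_{uv}, V_{vu}$ lies in any solution. Second, your treewidth argument asserts the per-edge gadgets are $\O(1)$-sized; they are in fact of size $\Theta(w(e))$ (and the per-vertex padding of size $\Theta(r)$), so you cannot simply insert them into existing bags --- you must hang a \emph{chain} of new nodes off a bag containing the relevant endpoints, each new bag adding only a constant number of gadget vertices, which yields width $\mathrm{tw}(G)+\O(1)$ as in the paper.
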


\begin{proof}
Let an instance of \MMO\ be given by a graph $G$, an edge weighting $w: 
\edges{G} \to \mathbb{N}^+$ in unary and a positive integer $r$.
From this we construct an instance of both \DAFNC{} and \EDAFNC{}.
An example is given in Figure~\ref{fig:minmaxoutdegree-reduction-example}.
For each $v \in \vertices{G}$, we define the set of new vertices
$H_v = \{h^v_1, \dots, h^v_{2r - 1}\}$, and for each $(u,v) \in \edges{G}$, we 
define the sets of new vertices
$V_{uv} = \{u^v_1, \dots, u^v_{w(u,v)}\}$,
$V_{uv}^\square = \{u^{v\square}_1, \dots, u^{v\square}_{w(u,v)}\}$,
$V_{vu} = \{v^u_1, \dots, v^u_{w(u,v)}\}$ and
$V_{vu}^\square = \{v^{u\square}_1, \dots, v^{u\square}_{w(u,v)}\}$.
We now define the graph $G'$ with
\begin{align*}
  \vertices{G'} ={}& \vertices{G} \cup \bigcup_{v \in \vertices{G}} H_v
  \cup \bigcup_{(u,v) \in \edges{G}}
  (V_{uv} \cup V_{uv}^\square \cup V_{vu} \cup V_{vu}^\square),\\
  \edges{G'} ={}& \{(v,h) \mid v \in \vertices{G},\; h \in H_v\}\\
     {}\cup{}&\{(u,x) \mid (u,v) \in \edges{G},\;
       x \in V_{uv} \cup V_{uv}^\square\}\\
     {}\cup{}& \{(x,v) \mid (u,v) \in \edges{G},\;
       x \in V_{vu} \cup V_{vu}^\square\}.
\end{align*}
We also define the set of complementary vertex pairs
$C = \{(u^v_i,v^u_i) \mid (u,v) \in \edges{G},\; 1 \leq i \leq w(u,v)\} \cup
\{(v^u_i,u^v_{i+1}) \mid (u,v) \in \edges{G},\; 1 \leq i < w(u,v)\}$.
Finally, we define the set of necessary vertices
$V_\triangle = \vertices{G} \cup \bigcup_{v \in \vertices{G}} H_v$,
the set of forbidden vertices
$V_\square = \bigcup_{(u,v) \in \edges{G}} (V_{uv}^\square \cup 
V_{vu}^\square)$
and
$k = \size{V_\triangle} + \sum_{(u,v) \in \edges{G}} w(u,v)$.
We use $I$ to denote $(G',k,C,V_\triangle,V_\square)$, which is an instance of 
\DAFNC\ and also of \EDAFNC.

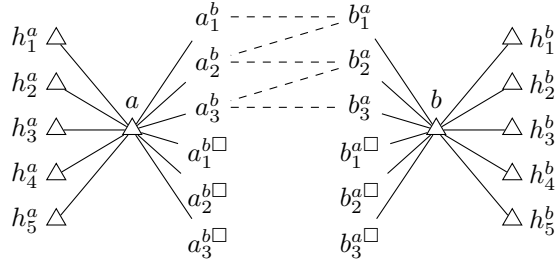
\begin{figure}
\centering
\begin{tikzpicture}[yscale=0.6]
\node (a) at (-2,0) [necessary,label=$a$] {};
\node (b) at (2,0) [necessary,label=$b$] {};

\node (ab1) at (-1,2.5) {$a^b_1$};
\node (ab2) at (-1,1.5) {$a^b_2$};
\node (ab3) at (-1,0.5) {$a^b_3$};
\node (ab1s) at (-1,-0.5) {$a^{b\square}_1$};
\node (ab2s) at (-1,-1.5) {$a^{b\square}_2$};
\node (ab3s) at (-1,-2.5) {$a^{b\square}_3$};

\node (ba1) at (1,2.5) {$b^a_1$};
\node (ba2) at (1,1.5) {$b^a_2$};
\node (ba3) at (1,0.5) {$b^a_3$};
\node (ba1s) at (1,-0.5) {$b^{a\square}_1$};
\node (ba2s) at (1,-1.5) {$b^{a\square}_2$};
\node (ba3s) at (1,-2.5) {$b^{a\square}_3$};

\node (ha1) at (-3,2) [necessary,label=left:$h^a_1$] {};
\node (ha2) at (-3,1) [necessary,label=left:$h^a_2$] {};
\node (ha3) at (-3,0) [necessary,label=left:$h^a_3$] {};
\node (ha4) at (-3,-1) [necessary,label=left:$h^a_4$] {};
\node (ha5) at (-3,-2) [necessary,label=left:$h^a_5$] {};
\node (hb1) at (3,2) [necessary,label=right:$h^b_1$] {};
\node (hb2) at (3,1) [necessary,label=right:$h^b_2$] {};
\node (hb3) at (3,0) [necessary,label=right:$h^b_3$] {};
\node (hb4) at (3,-1) [necessary,label=right:$h^b_4$] {};
\node (hb5) at (3,-2) [necessary,label=right:$h^b_5$] {};

\foreach \v in {a,b}
  \foreach \i in {1,...,5}
    \draw (\v) -- (h\v\i);

\foreach \i in {1,...,3} {
  \draw (a) -- (ab\i);
  \draw (a) -- (ab\i s);
  \draw (b) -- (ba\i);
  \draw (b) -- (ba\i s);
}

\draw [dashed] (ab1) to (ba1);
\draw [dashed] (ab2) to (ba2);
\draw [dashed] (ab3) to (ba3);

\draw [dashed] (ba1) -- (ab2);
\draw [dashed] (ba2) -- (ab3);
\end{tikzpicture}
\caption{%
Result of our transformation on a sample \MMO\ instance with
$r=3$ and two vertices $a,b$ that are connected by an edge of weight~$3$.
Complementary vertex pairs are shown via dashed lines.
Necessary and forbidden vertices have a $\triangle$ and $\square$ symbol next
to their name, respectively.%
}
\label{fig:minmaxoutdegree-reduction-example}
\end{figure}

Clearly $I$ can be computed in polynomial time.
We now show that the treewidth of the primal graph of $I$ depends only on the treewidth of $G$.
We do so by modifying an optimal tree decomposition $\calT$ of $G$ as follows:
\begin{enumerate}
\item For each $(u,v) \in \edges{G}$, we take an arbitrary node whose bag $B$ contains
both $u$ and $v$
and add to its children
a chain of nodes $N_1, \dots, N_{w(u,v)-1}$ such that the
bag of $N_i$ is $B \cup \{u^v_i, u^v_{i+1}, v^u_i, v^u_{i+1}\}$.
\item For each $(u,v) \in \edges{G}$, we take an arbitrary node whose bag $B$ 
  contains $u$ and add to its children a chain of nodes $N_1, \dots, 
  N_{w(u,v)}$ such that the bag of $N_i$ is $B \cup \{u^{v\square}_i\}$.
\item For each $(u,v) \in \edges{G}$, we take an arbitrary node whose bag $B$ 
  contains $v$ and add to its children a chain of nodes $N_1, \dots, 
  N_{w(u,v)}$ such that the bag of $N_i$ is $B \cup \{v^{u\square}_i\}$.
\item For each $v \in \vertices{G}$, we take an arbitrary node whose bag $B$ contains $v$
and add to its children
a chain of nodes
$N_1, \dots, N_{r-1}$
such that the bag of $N_i$ is
$B \cup \{h^v_i\}$.
\end{enumerate}
It is easy to verify that the result is a valid tree decomposition of the
primal graph of $I$ and its width is at most the treewidth of $G$ plus four.

It remains to show that our reduction is correct.
Obviously $I$ is a positive instance of \DAFNC\ iff it is a positive instance 
of \EDAFNC\ because the forbidden, necessary and complementary vertices make 
sure that every solution of the \DAFNC\ instance $I$ has exactly $k$ elements.
Hence we only consider \DAFNC.

The intention is that for each orientation of $G$ we have a solution
candidate $S$ in $I$ such that
an edge orientation from $u$ to $v$ entails
$V_{vu} \subseteq S$ and
$V_{uv} \cap S = \emptyset$,
and the other orientation entails $V_{uv} \subseteq S$ and $V_{vu} \cap S = \emptyset$.
For each vertex $v \in \vertices{G}$ and every incident edge $(v,u) \in 
\edges{G}$ regardless of its orientation, the vertex $v$ is attacked by the 
forbidden vertices $V_{vu}^\square$.
So every vertex $v \in \vertices{G}$ has as least as many attackers as the sum 
of the weights of all incident edges.
If in the orientation of $G$ all edges incident to $v$ are incoming edges, then 
each attack on $v$ from $V_{vu}^\square$ can be repelled by $V_{vu}$, since 
$V_{vu} \subseteq S$.
Due to the fact that the helper vertices $H_v$ consist of exactly $2r-1$ 
elements, $v$ can afford to have outgoing edges of total weight at most $r$.

We claim that $(G,w,r)$ is a positive instance of \MMO{} iff
$I$ is a positive instance of \DAFNC{}.

\medskip
\noindent
\emph{``Only if'' direction.}
Let $D$ be the directed graph given by an orientation of the edges of $G$ such that for each vertex the sum of weights of outgoing edges is at most $r$.
The set
$S = V_\triangle \cup \{v^u_1, \dots, v^u_{w(u,v)} \mid (u,v) \in \edges{D}\}$
is a defensive alliance in $G'$:
Let $x$ be an arbitrary element of $S$.
If $x$ is an element of a set $H_v$ or $V_{uv}$, then the only neighbor of $x$ 
in $G'$ is a necessary vertex, so $x$ can trivially defend itself; so suppose 
$x \in \vertices{G}$.
Let the sum of the weights of outgoing and incoming edges be denoted by 
$w^x_{\mathrm{out}}$ and $w^x_{\mathrm{in}}$, respectively.
The neighbors of $x$ that are also in $S$ consist of the elements of $H_x$ and 
all elements of sets $V_{xv}$ such that $(v,x) \in \edges{D}$.
Hence, including itself, $x$ has $2r + w^x_{\mathrm{in}}$ defenders in $G'$.
The attackers of $x$ consist of all elements of sets $V_{xv}$ such that $(x,v) 
\in \edges{D}$ (in total $w^x_{\mathrm{out}}$) and all elements of sets 
$V_{xv}^\square$ such that either $(v,x) \in \edges{D}$ or $(x,v) \in 
\edges{D}$ (in total $w^x_{\mathrm{in}} + w^x_{\mathrm{out}}$).
Hence $x$ has $w^x_{\mathrm{in}} + 2 w^x_{\mathrm{out}}$ attackers in $G'$.
This shows that $x$ has at least as many defenders as attackers, as by 
assumption $w^x_{\mathrm{out}} \leq r$.
Finally, it is easy to verify that
$\size{S} = k$,
$V_\square \cap S = \emptyset$,
$V_\triangle \subseteq S$,
and exactly one element of each pair of complementary vertices is in $S$.

\medskip
\noindent
\emph{``If'' direction.}
Let $S$ be a solution of $I$.
For every $(u,v) \in \edges{G}$, either $V_{uv} \subseteq S$ or $V_{vu} \subseteq S$ due to the complementary vertex pairs.
We define a directed graph $D$ by $\vertices{D} = \vertices{G}$ and
$\edges{D} = \{(u,v) \mid V_{vu} \subseteq S\} \cup \{(v,u) \mid V_{uv} \subseteq S\}$.
Suppose there is a vertex $x$ in $D$ whose sum of weights of outgoing edges is greater than $r$.
Clearly $x \in S$.
Let the sum of the weights of outgoing and incoming edges be denoted by 
$w^x_{\mathrm{out}}$ and $w^x_{\mathrm{in}}$, respectively.
The defenders of $x$ in $G'$ beside itself consist of the elements of $H_x$ and 
of $w^x_{\mathrm{in}}$ neighbors due to incoming edges in $D$.
These are in total $2r + w^x_{\mathrm{in}}$ defenders.
The attackers of $x$ in $G'$ consist of $2w^x_{\mathrm{out}}$ elements (of the 
form $x^v_i$ as well as $x^{v\square}_i$) due to outgoing edges in $D$ and 
$w^x_{\mathrm{in}}$ elements (of the form $x^{v\square}_i$) due to incoming 
edges.
These are in total $2w^x_{\mathrm{out}} + w^x_{\mathrm{in}}$ attackers.
But then $x$ has more attackers than defenders, as by assumption
$w^x_{\mathrm{out}} > r$.
\end{proof}

\subsection{Hardness of Defensive Alliance with Forbidden and Necessary 
Vertices}

Next we present a transformation $\tFNC$ that eliminates complementary vertex
pairs by turning a \DAFNC{} instance into an equivalent \DAFN{} instance.
Along with $\tFNC$, we define a function $\sFNC{I}$, for each \DAFNC{} instance
$I$,
such that the solutions of $I$ are in a one-to-one correspondence with those of
$\tFNC(I)$ in such a way that any two solutions of $I$ have the same size iff
the corresponding solutions of $\tFNC(I)$ have the same size.
We use these functions to obtain a polynomial-time reduction from \DAFNC{} to
\DAFN{} as well as from \EDAFNC{} to \EDAFN{}.

Before we formally define our reduction, we briefly describe the intuition 
behind the used gadgets.
The gadget in Figure~\ref{fig:complementary-reduction-gadget1} adds neighbors 
$a_1,\dots,a_n,a_1^\square,\dots,a_n^\square$ to every vertex $a$, which are so 
many that $a$ can only be in a solution if some of the new neighbors are also 
in the solution.
The new vertices are structured in such a way that every solution must in fact 
either contain all of $a,a_1,\dots,a_n$ or none of them.
Next, the gadget in Figure~\ref{fig:complementary-reduction-gadget2} is added 
for every complementary pair $(a,b)$.
This gadget is constructed in such a way that every solution must either 
contain all of $a_n,a^{ab},a^{ab}_1,\dots,a^{ab}_{n^2+n}$ or none of them, and 
the same holds for $b_n,b^{ab},b^{ab}_1,\dots,b^{ab}_{n^2+n}$.
By making the vertex $\triangle^{ab}$ necessary, every solution must contain 
one of these two sets.
At the same time, the bound on the solution size makes sure that we cannot 
afford to take both sets for any complementary pair.

\begin{definition}
We define a function \tFNC{}, which assigns a \DAFN{} instance to each \DAFNC{} 
instance
$I = (G,k,V_\square,V_\triangle,C)$.
For this, we use
$n$ to denote $\size{\vertices{G}}$
and first define a function
\[\sFNC{I} : x \mapsto x \cdot (n+1) + \size{C} \cdot (n^2+n+2).\]
For each $v \in \vertices{G}$, we introduce the following sets of new vertices.
\begin{align*}
  Y_v^\medcirc &{}=
  \{v_1, \dots, v_n\}
  &
  Y_v^\square &{}=
  \{v^\square_1, \dots, v^\square_n\}
\end{align*}
Next, for each $(a,b) \in C$, we introduce new vertices
$a^{ab}$, $b^{ab}$ and $\triangle^{ab}$ as well as,
for any $x \in \{a,b\}$, the following sets of new vertices.
\begin{align*}
  Z^{ab}_{x\medcirc} &{}=
  \{x^{ab}_1, \dots, x^{ab}_{n^2+n}\}
  &
  Z^{ab}_{x\square} &{}=
  \{x^{ab\square}_1, \dots, x^{ab\square}_{n^2+n}\}
\end{align*}
We use the notation
$u \oplus v$
to denote the set of edges
$\{(u,v),$ $(u,u^\square),$ $(v,v^\square),$ $(u,v^\square),$ $(v,u^\square)\}$.
\input{figures/complementary-reduction-gadget1}
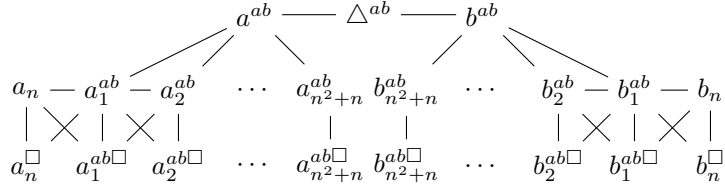
\begin{figure}
\centering
\begin{tikzpicture}
\node (ap) at (5.5,0) {$a^{ab}$};

\node (an) at (2.5,-1) {$a_n$};
\node (a1ab) at (3.5,-1) {$a_1^{ab}$};
\node (a2ab) at (4.5,-1) {$a_2^{ab}$};
\node at (5.5,-1) {$\cdots$};
\node (a1ab') at (6.5,-1) {$a_{n^2+n}^{ab}$};

\node (ans) at (2.5,-2) {$a_n^\square$};
\node (a1abs) at (3.5,-2) {$a_1^{ab\square}$};
\node (a2abs) at (4.5,-2) {$a_2^{ab\square}$};
\node at (5.5,-2) {$\cdots$};
\node (a1abs') at (6.5,-2) {$a_{n^2+n}^{ab\square}$};

\node (triangle) at (7,0) {$\triangle^{ab}$};

\node (bp) at (8.5,0) {$b^{ab}$};

\node (b1ab') at (7.5,-1) {$b_{n^2+n}^{ab}$};
\node at (8.5,-1) {$\cdots$};
\node (b2ab) at (9.5,-1) {$b_2^{ab}$};
\node (b1ab) at (10.5,-1) {$b_1^{ab}$};
\node (bn) at (11.5,-1) {$b_n$};

\node (b1abs') at (7.5,-2) {$b_{n^2+n}^{ab\square}$};
\node at (8.5,-2) {$\dots$};
\node (b2abs) at (9.5,-2) {$b_2^{ab\square}$};
\node (b1abs) at (10.5,-2) {$b_1^{ab\square}$};
\node (bns) at (11.5,-2) {$b_n^{\square}$};

\draw (ap) -- (a1ab);
\draw (ap) -- (a2ab);
\draw (ap) -- (a1ab');
\draw (ap) -- (triangle) -- (bp);
\draw (bp) -- (b1ab);
\draw (bp) -- (b2ab);
\draw (bp) -- (b1ab');

\draw (an) -- (a1ab) -- (a2ab);
\draw (bn) -- (b1ab) -- (b2ab);

\draw (an) -- (ans);
\draw (a1ab) -- (a1abs);
\draw (a2ab) -- (a2abs);
\draw (a1ab') -- (a1abs');
\draw (bn) -- (bns);
\draw (b1ab) -- (b1abs);
\draw (b2ab) -- (b2abs);
\draw (b1ab') -- (b1abs');

\draw (an) -- (a1abs) -- (a2ab);
\draw (ans) -- (a1ab) -- (a2abs);
\draw (bn) -- (b1abs) -- (b2ab);
\draw (bns) -- (b1ab) -- (b2abs);
\end{tikzpicture}
\caption{Gadget for each pair of complementary vertices $(a,b)$ in the reduction from \SSFNC{} to \SSFN{}. The vertices $a_n$, $a_n^\square$, $b_n$ and $b_n^\square$ have additional neighbors as depicted in Figure~\ref{fig:complementary-reduction-gadget1}.}
\label{fig:complementary-reduction-gadget2}
\end{figure}
Now we define
the \DAFN{} instance
$\tFNC(I) = (G',k',V_\square',V_\triangle')$, where
$k' = \sFNC{I}(k)$,
$V_\square' = V_\square \cup
\bigcup_{v \in \vertices{G}} Y_v^\square \cup
\bigcup_{(a,b) \in C} (Z^{ab}_{a\square} \cup Z^{ab}_{b\square})$,
$V_\triangle' = V_\triangle \cup \bigcup_{(a,b) \in C} \{\triangle^{ab}\}$
and $G'$ is the graph defined by
\begin{align*}
  \vertices{G'} = \vertices{G} &{}\cup
  \bigcup_{v \in \vertices{G}} (Y_v^\medcirc \cup Y_v^\square) \cup{}\\
  &{}\cup \bigcup_{(a,b) \in C} \big(\{\triangle^{ab}, a^{ab}, b^{ab}\} \cup 
  Z^{ab}_{a\medcirc} \cup Z^{ab}_{b\medcirc} \cup Z^{ab}_{a\square} \cup 
Z^{ab}_{b\square}\big),
\end{align*}
\begin{align*}
\edges{G'} = \edges{G} &{}\cup
\bigcup_{v \in \vertices{G}}
\big(
(\{v\} \times Y_v^\medcirc)
\cup
(\{v\} \times Y_v^\square)
\cup
\bigcup_{1\leq i < n} v_i \oplus v_{i+1}
\big)
\cup{}\\&{}\cup
\bigcup_{(a,b) \in C}
\bigcup_{x \in \{a,b\}} \big(
\{(\triangle^{ab}, x^{ab})\}
\cup
(\{x^{ab}\} \times Z^{ab}_{x\medcirc})
\cup{}\\
&\phantom{{}\cup \bigcup_{(a,b) \in C}\bigcup_{x \in \{a,b\}} \big(}
{}\cup
x_n \oplus x_1^{ab}
\cup
\bigcup_{1 \leq i < n^2+n}
x_i^{ab} \oplus x_{i+1}^{ab}
\big).
\end{align*}
We illustrate our construction in 
Figures~\ref{fig:complementary-reduction-gadget1} and 
\ref{fig:complementary-reduction-gadget2}.
\label{def:dafnc-to-dafn}
\end{definition}

\begin{lemma}
Let $I = (G,k,V_\square,V_\triangle,C)$ be a \DAFNC{} instance,
let $A$ be the set of solutions of $I$ and let $B$ be the set of solutions of 
the \DAFN{} instance $\tFNC(I)$.
There is a bijection $f: A \to B$ such that
$\size{f(S)} = \sFNC{I}(\size{S})$ holds for every $S \in A$.
\label{lem:dafnc-to-dafn-correct}
\end{lemma}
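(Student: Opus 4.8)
The plan is to exhibit the bijection $f$ explicitly and then verify that it maps solutions to solutions in both directions, using Observations~\ref{obs:one-in-s} and~\ref{obs:all-in-s} to control the behaviour of the gadget vertices. Given a solution $S \in A$, I would define
\[
f(S) = S \cup \bigcup_{v \in S \cap \vertices{G}} Y_v^\medcirc
  \cup \bigcup_{(a,b) \in C,\; a \in S} \big(\{a^{ab}\} \cup Z^{ab}_{a\medcirc}\big)
  \cup \bigcup_{(a,b) \in C,\; b \in S} \big(\{b^{ab}\} \cup Z^{ab}_{b\medcirc}\big)
  \cup \bigcup_{(a,b) \in C} \{\triangle^{ab}\}.
\]
The size bookkeeping is then immediate: each original vertex taken contributes itself plus its $n$ gadget copies $Y_v^\medcirc$ (factor $n+1$), and each of the $\size{C}$ complementary pairs contributes exactly one of the two ``arms'' of size $n^2+n+1$ plus the necessary vertex $\triangle^{ab}$, i.e.\ $n^2+n+2$ vertices; summing gives $\size{f(S)} = \size{S}\cdot(n+1) + \size{C}\cdot(n^2+n+2) = \sFNC{I}(\size{S})$, matching $k' = \sFNC{I}(k)$.

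**Forward direction: $f(S)$ is a solution of $\tFNC(I)$.** I would check the defensive-alliance condition vertex by vertex. For $x \in f(S) \cap \vertices{G}$: the new neighbours of $x$ are the $n$ vertices of $Y_x^\medcirc$ (all in $f(S)$ since $x \in S$) and the $n$ vertices of $Y_x^\square$ (all forbidden, hence attackers); these cancel in pairs, so $x$'s balance in $G'$ equals its balance in $G$ with respect to $S$ — but in $I$, $S$ satisfied the condition at $x$ counting the original edges only, so $x$ is fine. For a vertex $v_i \in Y_x^\medcirc$ (with $x \in S$): its closed neighbourhood consists of itself, $x$, $v_{i-1}, v_{i+1}$ (when they exist) and the forbidden $v_i^\square, v_{i-1}^\square, v_{i+1}^\square$ plus, for $i=n$, the chain into $Z^{ab}_{x\medcirc}$; the $\oplus$-structure is designed so each such vertex has exactly half its neighbours forbidden and the other half in $f(S)$, so Observation~\ref{obs:all-in-s}'s arithmetic goes through. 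Similarly each $x^{ab}_i \in Z^{ab}_{x\medcirc}$ and the vertex $x^{ab}$ have a matched set of in-solution neighbours and forbidden attackers. Finally $\triangle^{ab}$ has neighbours $a^{ab}, b^{ab}$, exactly one of which is in $f(S)$ (since $S$ contains exactly one of $a,b$), giving $\size{N[\triangle^{ab}] \cap f(S)} = 2 = \size{N[\triangle^{ab}] \setminus f(S)}$. One also checks $V_\square' \cap f(S) = \emptyset$, $V_\triangle' \subseteq f(S)$, and $1 \le \size{f(S)} \le k'$.

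**Backward direction: every solution of $\tFNC(I)$ has the form $f(S)$.** This is where the real work lies. Given $S' \in B$, I would first show that for each $v \in \vertices{G}$, either $\{v\} \cup Y_v^\medcirc \subseteq S'$ or $S'$ is disjoint from $\{v\} \cup Y_v^\medcirc$: if $v \in S'$ then since $v$'s $n$ neighbours in $Y_v^\square$ are forbidden attackers, $v$ needs $n$ defenders among its $Y_v^\medcirc$-neighbours plus whatever original-graph help it has, forcing (via Observation~\ref{obs:one-in-s} applied repeatedly, or via Observation~\ref{obs:all-in-s} on the $\oplus$-gadget) all of $Y_v^\medcirc$ in; and conversely if some $v_i \in S'$ then propagating along the $\oplus$-chain $v_1 \oplus v_2 \oplus \cdots \oplus v_n$ and back to $v$ pulls in the whole set. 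An analogous all-or-nothing dichotomy holds for each arm $\{x^{ab}\} \cup Z^{ab}_{x\medcirc}$ and is linked to $x_n$ (hence to $x$) through the edge set $x_n \oplus x_1^{ab}$. Since $\triangle^{ab} \in V_\triangle' \subseteq S'$ and its only neighbours are $a^{ab}, b^{ab}$, Observation~\ref{obs:one-in-s} forces at least one arm present; the size bound $k' = \sFNC{I}(k)$ then forbids taking both arms for any pair (each double-take would cost an extra $n^2+n+1 > $ the total possible ``budget slack'', which is at most $k\cdot(n+1)+\size{C}\cdot(n^2+n+2)$ spread over at most $n$ original vertices and $\size{C}$ pairs — the key inequality is that $n^2+n+1$ exceeds any saving available elsewhere). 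Hence exactly one of $a,b$ lies in $S'$ for each $(a,b)\in C$. Setting $S = S' \cap \vertices{G}$, the dichotomies give $S' = f(S)$; restricting the alliance condition at the original vertices back to $G$ (the $Y^\medcirc/Y^\square$ contributions again cancel) shows $S$ is a defensive alliance in $G$ respecting $V_\square, V_\triangle$, and the ``exactly one of $a,b$'' property shows $S$ respects $C$, so $S \in A$. Injectivity of $f$ is clear from $S = f(S)\cap\vertices{G}$, and surjectivity is exactly this argument; the size identity was already established.

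**Main obstacle.** The delicate point is the quantitative argument in the backward direction that rules out taking both arms of a complementary pair: one must verify that the gap $n^2+n+1$ genuinely dominates the slack in the size constraint $\size{S'} \le k'$, i.e.\ that the construction's constants were chosen so that $\size{C}$ ``double-takes'' cannot be compensated by omitting original vertices or arms elsewhere. Verifying the alliance condition at the ``seam'' vertices $v_n$, $x_n$, and at $x^{ab}$, where two $\oplus$-gadgets meet and degrees differ from the generic case, is the other place requiring careful (but routine) counting.
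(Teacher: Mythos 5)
Your proposal is correct and follows essentially the same route as the paper: the same bijection $f$, the same size bookkeeping, a forward check that the gadgets' in-solution and forbidden neighbours balance out, and a backward argument that propagates membership through the $\oplus$-chains via Observations~\ref{obs:one-in-s} and~\ref{obs:all-in-s}, uses $\triangle^{ab}$ to force one arm per pair, and invokes the size bound (via $n^2+n+2 > k(n+1)$ for $k \le n$) to exclude taking both arms. The only blemish is the trivial miscount $\size{N[\triangle^{ab}] \setminus f(S)} = 2$ (it is $1$, since $\size{N[\triangle^{ab}]} = 3$), which does not affect the conclusion.
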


\begin{proof}
We use the same auxiliary notation as in Definition~\ref{def:dafnc-to-dafn}
and we define $f$ as
$S \mapsto S \cup
\bigcup_{v \in S} Y_v^\medcirc
\cup
\bigcup_{(a,b) \in C,\, x \in S \cap \{a,b\}} (\{\triangle^{ab}, x^{ab}\} \cup 
Z^{ab}_{x\medcirc})$.
For every $S \in A$,
we thus obtain
$\size{f(S)} = \sFNC{I}(\size{S})$,
and we first show that indeed $f(S) \in B$.

Let $S \in A$ and let $S'$ denote $f(S)$.
Obviously $S'$ satisfies
$V_\square' \cap S' = \emptyset$ and
$V_\triangle' \subseteq S'$.
To see that $S'$ is a defensive alliance in $G'$,
let $x$ be an arbitrary element of $S'$.
If $x \notin S$, then $x$ clearly has as least as many neighbors in $S'$ as 
neighbors not in $S'$ by construction of $f$, so suppose $x \in S$.
There is a defense
$\mu: N_G[x] \setminus S \to N_G[x] \cap S$
since $S$ is a defensive alliance in $G$.
We use this to construct a defense
$\mu': N_{G'}[x] \setminus S' \to N_{G'}[x] \cap S'$.
For any attacker $v$ of $x$ in $G'$, we distinguish two cases.
\begin{itemize}
\item If $v$ is some $x_i^\square \in Y_x^\square$ for some $x \in 
  \vertices{G}$, we set $\mu'(v) = x_i$.
This element is in $N_{G'}[x]$ by construction.
\item Otherwise $v$ is in $N_G[x] \setminus S$ (by our construction of $S'$).
Since the codomain of $\mu$ is a subset of the codomain of $\mu'$, we may set $\mu'(v) = \mu(v)$.
\end{itemize}
Since $\mu'$ is injective, each attack on $x$ in $G'$ can be repelled by $S'$.
Hence $S'$ is a defensive alliance in $G'$.

Clearly $f$ is injective.
It remains to show that $f$ is surjective.
Let $S'$ be a solution of $\tFNC(I)$.
First we make the following observations for each $v \in \vertices{G}$:
\begin{itemize}
\item
If $v \in S'$, then $Y_v^\medcirc \cap S' \neq \emptyset$
due to Observation~\ref{obs:one-in-s}, since
$Y_v^\medcirc \cup Y_v^\square$ contains a majority of neighbors of $v$, and 
the vertices in $Y_v^\square$ are forbidden.
\item
For each $v^{ab} \in S'$, where $(a,b) \in C$ such that $v=a$ or $v=b$, it 
holds that $Z^{ab}_{v\medcirc} \cap S' \neq \emptyset$ again due to 
Observation~\ref{obs:one-in-s}.
\item
If $S'$ contains an element of $Y_v^\medcirc$, then
$\{v\} \cup Y_v^\medcirc \cup \bigcup_{(v,z) \in C} Z^{vz}_{v\medcirc} \cup 
\bigcup_{(z,v) \in C} Z^{zv}_{v\medcirc} \subseteq S'$
by repeated applications of Observation~\ref{obs:all-in-s}.
To see this, note in particular that $N(v_n)$ can be partitioned into the two 
equal-sized sets
$\{v,v_{n-1}\} \cup \{v^{vz}_1 \mid (v,z) \in C\} \cup \{v^{zv}_1 \mid (z,v) 
\in C\}$
and
$\{v_{n-1}^\square, v_n^\square\} \cup \{v^{vz\square}_1 \mid (v,z) \in C\} 
\cup \{v^{zv\square}_1 \mid (z,v) \in C\}$,
and all vertices in the latter set are forbidden.
\item
If $S'$ contains an element of $Z^{ab}_{v\medcirc}$, where $(a,b) \in C$ such 
that $v=a$ or $v=b$, then $\{v^{ab}\} \cup Y_v^\medcirc  \cup \bigcup_{(v,z) 
\in C} Z^{vz}_{v\medcirc} \cup \bigcup_{(z,v) \in C} Z^{zv}_{v\medcirc} 
\subseteq S'$ for similar reasons.
\end{itemize}
It follows that for each $v \in \vertices{G}$, $S'$ contains either all or none 
of
$\{v\} \cup Y_v^\medcirc \cup \bigcup_{(v,z) \in C} \big(\{v^{vz}\} \cup 
Z^{vz}_{v\medcirc}\big) \cup \bigcup_{(z,v) \in C} \big(\{v^{zv}\} \cup 
Z^{zv}_{v\medcirc}\big)$.

For every $(a,b) \in C$,
$S'$ contains $a^{ab}$ or $b^{ab}$,
since $\triangle^{ab} \in S'$,
whose neighbors are $a^{ab}$ and $b^{ab}$.
It follows that $\size{S'} > \size{C} \cdot (n^2+n+2)$ even if $S'$ contains 
only one of each $(a,b) \in C$.
If, for some $(a,b) \in C$, $S'$ contained both $a$ and $b$, we could derive a 
contradiction to $\size{S'} \leq \sFNC{I}(k) = k \cdot (n+1) + \size{C} \cdot 
(n^2+n+2)$ because then $\size{S'} > (\size{C}+1) \cdot (n^2+n+2) >
\sFNC{I}(k)$.
So $S'$ contains either $a$ or $b$ for any $(a,b) \in C$.

We construct $S = S' \cap \vertices{G}$ and observe that
$S' = f(S)$,
$V_\triangle \subseteq S$, $V_\square \cap S = \emptyset$, and 
$\size{S \cap \{a,b\}} = 1$ for each $(a,b) \in C$.
It remains to show that $S$ is a defensive alliance in $G$.
Let $x$ be an arbitrary element of $S$.
We observe that $N_{G'}[x] \cap S' = (N_G[x] \cap S) \cup Y_x^\medcirc$ and 
similarly
$N_{G'}[x] \setminus S' = (N_G[x] \setminus S) \cup
Y_x^\square$.
Since the cardinality of each set $Y_x^\medcirc$ is equal to the cardinality of 
$Y_x^\square$,
this implies
$\size{N_{G'}[x] \cap S'} - \size{N_G[x] \cap S} = \size{N_{G'}[x] \setminus 
S'} - \size{N_G[x] \setminus S}$.
Since $S'$ is a defensive alliance in $G'$ and $x \in S'$,
it holds that $\size{N_{G'}[x] \cap S'} \geq \size{N_{G'}[x] \setminus S'}$.
We conclude
that
$\size{N_G[x] \cap S} \geq \size{N_G[x] \setminus S}$.
Hence $S$ is a defensive alliance in $G$.
\end{proof}

To obtain the hardness result for \DAFN{} parameterized by treewidth, it 
remains to show that the reduction specified by $\tFNC$ preserves bounded 
treewidth.

\begin{lemma}
\DAFN{}, parameterized by the treewidth of the graph, is \Wone{}-hard.
\label{lem:dafn-wone-hard}
\end{lemma}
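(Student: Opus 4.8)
The plan is to assemble the hardness of \DAFN{} from three pieces. From the preceding lemma we already know that \DAFNC{} is \Wone-hard parameterized by the treewidth of its primal graph, and from Lemma~\ref{lem:dafnc-to-dafn-correct} we know that $\tFNC$ is a polynomial-time map turning a \DAFNC{} instance $I$ into an equivalent \DAFN{} instance $\tFNC(I)$, via a bijection of solutions $f$ with $\size{f(S)} = \sFNC{I}(\size{S})$; since $\sFNC{I}$ is affine with positive slope, this bijection also respects sizes exactly, so it simultaneously handles the exact variant \EDAFN{}. The only thing still missing to make $\tFNC$ an FPT reduction is that it does not increase treewidth uncontrollably: I need to show that the treewidth of the graph $G'$ in $\tFNC(I)$ is bounded by a function of the treewidth of the primal graph $G_P$ of $I$. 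Everything else then follows immediately.

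To bound the treewidth of $G'$, I would take an optimal tree decomposition $\calT = (T,\chi)$ of $G_P$, of width $w$, and graft onto it the vertices that $\tFNC$ adds. First I add $v_n$ and $v_n^\square$ to every bag of $\calT$ that already contains $v$; since each bag holds at most $w+1$ vertices of $G_P$, this raises the width only to at most $3(w+1)-1$. Next, for each $v \in \vertices{G}$, I hang below some bag containing $v$ a chain that walks down the ``fattened path'' $v_1,\dots,v_n$ of Figure~\ref{fig:complementary-reduction-gadget1}, whose bags consist of $v$, two consecutive $v_i,v_{i+1}$ and their square twins (four extra vertices), arranging the chain so that the bag introducing $v_n$ is adjacent to the $\calT$-part. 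Finally, for each complementary pair $(a,b) \in C$ --- which, being an edge of $G_P$, lies in a common bag $\beta$ --- I hang below $\beta$ a single chain realizing the whole gadget of Figure~\ref{fig:complementary-reduction-gadget2}: it first descends the $a$-side, keeping $a^{ab}$ in every bag together with two consecutive vertices of $Z^{ab}_{a\medcirc}$ and their twins (plus $a_n,a_n^\square$ at the top, to cover the edges in $a_n \oplus a_1^{ab}$), then introduces $\triangle^{ab}$ next to $a^{ab}$, then descends the $b$-side symmetrically, carrying $b_n,b_n^\square$ down from $\beta$ so that they are present where $b_1^{ab}$ is introduced. All of these new chains have bags of constant size, so the width stays $O(w)$, i.e., bounded by a function of $\mathrm{tw}(G_P)$. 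I would close by checking the three tree-decomposition axioms; the only non-obvious points are connectedness of each $v_n$ (its bags form the $\calT$-bags containing $v$ together with the chains hanging off them) and of each $\triangle^{ab}$ (confined to one chain below $\beta$).

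The step I expect to be the real obstacle, and the reason for the slightly unusual construction, is placing $\triangle^{ab}$. This vertex bridges the two halves of the complementary-pair gadget, so the naive idea of attaching the $a$-half under $a$'s subtree and the $b$-half under $b$'s subtree fails: $\triangle^{ab}$ would then be forced, by the connectedness condition, to appear in every bag on the tree-path between those two subtrees, and a single bag of $\calT$ could thereby collect one such bridge vertex for each complementary pair routed through it, wrecking the width bound. The construction above sidesteps this by placing the \emph{entire} pair-gadget in one chain below $\beta$, which is legitimate precisely because that gadget attaches to the rest of $G'$ only at $a_n$ and $b_n$, and the opening step of spreading $v_n,v_n^\square$ over $\calT$ makes both of these available at $\beta$. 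Once the construction is set up this way, verifying the bag-size bound and the connectedness conditions is routine bookkeeping, and the lemma follows by composing $\tFNC$ with the \Wone-hardness of \DAFNC{}.
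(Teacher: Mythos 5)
Your proposal is correct and follows essentially the same route as the paper: compose the \Wone-hardness of \DAFNC{} with the reduction $\tFNC$ (whose correctness is Lemma~\ref{lem:dafnc-to-dafn-correct}), and bound the treewidth of $\tFNC(I)$ by spreading $v_n,v_n^\square$ over the bags containing $v$, hanging a constant-width chain for each path gadget $v_1,\dots,v_n$, and placing each entire complementary-pair gadget in a single chain anchored at a bag containing both $a$ and $b$ --- which exists precisely because $C$ is included in the edge set of the primal graph. The only differences from the paper's construction (attaching the pair-gadget chain below $\beta$ with constant-size bags rather than splicing it between a node and its parent while retaining the full bag $B$) are cosmetic and yield the same $O(w)$ bound.
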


\begin{proof}
Let $I$ be a \DAFNC{} instance whose primal graph we denote by $G$.
We obtain an equivalent \DAFN{} instance $\tFNC(I)$, whose graph we denote by 
$G'$.
This reduction is correct, as shown in Lemma~\ref{lem:dafnc-to-dafn-correct}.
It remains to show that the treewidth of $G'$ is bounded by a function of the treewidth of $G$.
Let $\calT$ be an optimal nice tree decomposition of $G$.
We build a tree decomposition $\calT'$ of $G'$ by modifying a copy of $\calT$ 
in the following way:
For each vertex $v \in \vertices{G}$, we add $v_n$ and $v_n^\square$ to every 
bag containing $v$.
Then we pick an arbitrary node $t$ in $\calT$ whose bag contains $v$, and we 
add new children $N_1,\dots,N_{n-1}$ to $t$ such that the bag of $N_i$ is 
$\{v,v_i,v_i^\square,v_{i+1},v_{i+1}^\square\}$.
Next, for every pair $(a,b)$ of complementary vertices, we pick an arbitrary 
node $t$ in $\calT$ whose bag $B$ contains both $a_n$ and $b_n$, and we add a 
chain of nodes $N_1, \dots, N_{2n^2+2n-1}$ between $t$ and its parent such 
that,
for $1 \leq i < n^2+n$, the bag of $N_i$ is $B \cup \{a^{ab}, a_i^{ab}, 
a_i^{ab\square}, a_{i+1}^{ab}, a_{i+1}^{ab\square}\}$,
the bag of $N_{n^2+n}$ is
$B \cup \{a^{ab}, b^{ab}, \triangle^{ab}\}$, and the bag of $N_{n^2+n+i}$ is $B 
\cup \{b^{ab}, b_{n^2+n+1-i}^{ab}, b_{n^2+n+1-i}^{ab\square}, b_{n^2+n-i}^{ab}, 
b_{n^2+n-i}^{ab\square}\}$.
It is easy to verify that $\calT'$ is a valid tree decomposition of $G'$.
Furthermore, the width of $\calT'$ is at most three times the width of $\calT$ 
plus five.
\end{proof}

The instances of \DAFNC{} are identical to the instances of the exact variant,
so $\tFNC$ is also applicable to the exact case.
In fact it turns out that this gives us also a reduction from \EDAFNC{} to 
\EDAFN{}.

\begin{lemma}
\EDAFN{}, parameterized by the treewidth of the graph, is \Wone{}-hard.
\label{lem:edafn-wone-hard}
\end{lemma}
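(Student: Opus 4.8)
The plan is to observe that the reduction $\tFNC$ from Definition~\ref{def:dafnc-to-dafn} already works for the exact variants, so essentially no new work is needed beyond combining facts we have. An \EDAFNC{} instance is the very same object $I = (G,k,V_\square,V_\triangle,C)$ as a \DAFNC{} instance, so I would map it to the \EDAFN{} instance $\tFNC(I) = (G',k',V_\square',V_\triangle')$ with $k' = \sFNC{I}(k)$, exactly as in the non-exact case. This is computable in polynomial time, and the tree-decomposition construction in the proof of Lemma~\ref{lem:dafn-wone-hard} never refers to the size bound, so the same decomposition shows that the treewidth of $G'$ is at most three times the treewidth of the primal graph of $I$ plus five. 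Hence $\tFNC$ is an FPT reduction for this parameter as well.

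For correctness I would lean on Lemma~\ref{lem:dafnc-to-dafn-correct}: letting $A$ and $B$ be the solution sets of $I$ and $\tFNC(I)$ viewed as \DAFNC{} and \DAFN{} instances, it gives a bijection $f : A \to B$ with $\size{f(S)} = \sFNC{I}(\size{S})$ for all $S \in A$. The solutions of the \EDAFNC{} instance $I$ are precisely the $S \in A$ with $\size{S} = k$, and the solutions of the \EDAFN{} instance $\tFNC(I)$ are precisely the $S' \in B$ with $\size{S'} = k'$. The key point is that $\sFNC{I}$ is strictly increasing (it is affine with positive slope $n+1$, where $n = \size{\vertices{G}}$), hence injective. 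So $f$ restricts to a bijection between these two sets: in the forward direction, $\size{S} = k$ yields $\size{f(S)} = \sFNC{I}(k) = k'$; in the backward direction, any $S' \in B$ with $\size{S'} = k'$ equals $f(S)$ for some $S \in A$ by surjectivity of $f$, and then $\sFNC{I}(\size{S}) = k' = \sFNC{I}(k)$ forces $\size{S} = k$ by injectivity. Consequently $I$ is a positive \EDAFNC{} instance iff $\tFNC(I)$ is a positive \EDAFN{} instance.

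Finally, I would conclude by recalling that \EDAFNC{} parameterized by the treewidth of the primal graph is \Wone-hard (established earlier in this section alongside \DAFNC{}), so the FPT reduction above transfers this hardness to \EDAFN{} parameterized by treewidth. I do not anticipate a genuine obstacle; the only step worth stating explicitly is the backward direction of the restricted bijection, since it is precisely there that strict monotonicity of $\sFNC{I}$ is used, even though this is immediate from its definition.
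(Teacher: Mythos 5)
Your proposal is correct and matches the paper's own argument: both invoke the bijection from Lemma~\ref{lem:dafnc-to-dafn-correct}, use the invertibility (injectivity) of $\sFNC{I}$ to restrict it to solutions of size exactly $k$ and $k'$, and reuse the treewidth bound from the proof of Lemma~\ref{lem:dafn-wone-hard}. No gaps.
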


\begin{proof}
Let $I$ and $I' = \tFNC(I)$ be our \EDAFNC{} and \EDAFN{} instances, 
respectively, and let
$k$ and $k'$ denote their respective solution sizes.
By Lemma~\ref{lem:dafnc-to-dafn-correct},
there is a bijection $f$ between the
solutions of $I$ and the solutions of $I'$
such that,
for every solution $S$ of $I$, $f(S)$ has $\sFNC{I}(k) = k'$ elements,
and for every solution $S'$ of $I'$,
$f^{-1}(S')$ has $k$ elements since $\sFNC{I}$ is invertible.
We can derive the bound on the treewidth of $I'$ as in the proof of 
Lemma~\ref{lem:dafn-wone-hard}.
\end{proof}

\subsection{Hardness of Defensive Alliance with Forbidden Vertices}

Now we present a transformation $\tFN$ that eliminates necessary vertices.
Our transformation not only operates on a problem instance, but also requires
an ordering $\preceq$ of the non-forbidden vertices of the graph.
Our reductions based on $\tFN$ will be correct for every ordering ${\preceq}$, 
but in order to keep the treewidth of the resulting problem instance bounded by 
the treewidth of the original problem instance, we must choose a suitable 
ordering.
We will describe this in detail later; for now we can consider ${\preceq}$ to 
be an arbitrary ordering of the non-forbidden vertices.

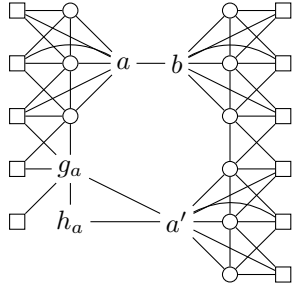
\begin{figure}
\centering
\begin{tikzpicture}[scale=0.7,inner sep=2pt]
\node (a) at (2,4) {$a$};

\node (a1) at (1,5) [anonymous] {};
\node (a2) at (1,4) [anonymous] {};
\node (a3) at (1,3) [anonymous] {};

\node (a1s) at (0,5) [forbidden] {};
\node (a2s) at (0,4) [forbidden] {};
\node (a3s) at (0,3) [forbidden] {};

\node (ga) at (1,2) {$g_a$};
\node (ha) at (1,1) {$h_a$};
\node (gas) at (0,2) [forbidden] {};
\node (has) at (0,1) [forbidden] {};

\node (ap) at (3,1) {$a'$};
\node (ap1) at (4,2) [anonymous] {};
\node (ap2) at (4,1) [anonymous] {};
\node (ap3) at (4,0) [anonymous] {};
\node (ap1s) at (5,2) [forbidden] {};
\node (ap2s) at (5,1) [forbidden] {};
\node (ap3s) at (5,0) [forbidden] {};

\node (b) at (3,4) {$b$};

\node (b1) at (4,5) [anonymous] {};
\node (b2) at (4,4) [anonymous] {};
\node (b3) at (4,3) [anonymous] {};

\node (b1s) at (5,5) [forbidden] {};
\node (b2s) at (5,4) [forbidden] {};
\node (b3s) at (5,3) [forbidden] {};

\draw (a) -- (b);

\draw (a) -- (a1);
\draw (a) -- (a2);
\draw (a) -- (a3);

\draw (ap) -- (ga);
\draw (ap) -- (ha);

\draw (a1) -- (a1s);
\draw (a2) -- (a2s);
\draw (a3) -- (a3s);
\draw (ga) -- (gas);

\draw (a1) -- (a2) -- (a3) -- (ga) -- (ha);
\draw (a1) -- (a2s) -- (a3) -- (gas);
\draw (a1s) -- (a2) -- (a3s) -- (ga);
\draw (has) -- (ga);

\draw (b) -- (b1);
\draw (b) -- (b2);
\draw (b) -- (b3);

\draw (b1) -- (b1s);
\draw (b2) -- (b2s);
\draw (b3) -- (b3s);

\draw (ap) -- (ap1) -- (ap1s);
\draw (ap) -- (ap2) -- (ap2s);
\draw (ap) -- (ap3) -- (ap3s);

\draw (b1) -- (b2) -- (b3) -- (ap1) -- (ap2) -- (ap3);
\draw (b1) -- (b2s) -- (b3) -- (ap1s) -- (ap2) -- (ap3s);
\draw (b1s) -- (b2) -- (b3s) -- (ap1) -- (ap2s) -- (ap3);

\draw (ap1s) to (ap);
\draw [bend right] (ap2s) to (ap);
\draw (ap3s) to (ap);

\draw (a1s) -- (a);
\draw [bend left] (a2s) to (a);
\draw (a3s) -- (a);
\draw (b1s) -- (b);
\draw [bend right] (b2s) to (b);
\draw (b3s) -- (b);
\end{tikzpicture}
\caption{Result of the transformation $\tFN{}$ applied to an example graph with two adjacent vertices $a$ and $b$, where $b$ is necessary. Every solution in the depicted graph contains $a'$, $h_a$ and $b$.}
\label{fig:necessary-reduction-example-figure}
\end{figure}
Before formally defining the transformation $\tFN$, we refer to
Figure~\ref{fig:necessary-reduction-example-figure}, which shows the result
for a simple example graph with only two vertices $a$ and $b$, of which $b$ is
necessary.
The basic idea is that the vertex $a'$ must be in every solution $S$:
If $a$ or any vertex to the left of $a$ is in $S$, it eventually forces $a'$ to 
be in $S$ as well.
Likewise, if $b$ or any vertex to the right of $b$ is in $S$, it also forces 
$a'$ to be in $S$.
Once $a' \in S$, the construction to the right of $a'$ makes sure that $b \in 
S$.
We will generalize this to instances containing more vertices so that every 
necessary vertex as well as the primed copy of each non-necessary vertex is in 
every solution.

\begin{definition}
We define a function \tFN{}, which assigns a \DAF{} instance to each pair $(I,{\preceq})$,
where
$I = (G,k,V_\square,V_\triangle)$ is a \DAFN{} instance and
${\preceq}$ is an ordering of the non-forbidden elements of $\vertices{G}$.
For this,
let $V_\medcirc$ denote $\vertices{G} \setminus (V_\square \cup V_\triangle)$.
We use
$n$ to denote $\size{\vertices{G}}$,
and we first define a function
$\sFN{I} : x \mapsto (n+3) \cdot (x + \size{V_\medcirc}) - \size{V_\triangle}$.
We use $H$ to denote the set of new vertices $\{v', g_v, h_v, g_v^\square, 
h_v^\square \mid v \in V_\medcirc\}$.
The intention is for each $g_v^\square$ and $h_v^\square$ to be forbidden, for 
each $v'$ and $h_v$ to be in every solution, and for $g_v$ to be in a solution 
iff $v$ is in it at the same time.
We write
$V^+$ to denote $V_\triangle \cup V_\medcirc \cup \{v' \mid v \in 
V_\medcirc\}$;
for each $v \in V^+$, we use $A_v$ to denote the set of new vertices
$\{v_1, \dots, v_{n+1}, v_1^\square, \dots, v_{n+1}^\square \}$,
and we use shorthand notation $A_v^\medcirc = \{v_1, \dots, v_{n+1}\}$ and 
$A_v^\square = \{v_1^\square, \dots, v_{n+1}^\square\}$.
The intention is for each $v_i^\square$ to be forbidden and for each $v_i$ to 
be in a solution iff $v$ is in it at the same time.
We use the notation 
$u \oplus v$
to denote the set of edges
$\{(u,v), (u,u^\square), (v,v^\square), (u,v^\square), (v,u^\square)\}$.
For any vertex $v \in V_\medcirc \cup V_\triangle$, we define $p(v) = v$ if $v 
\in V_\triangle$ and $p(v) = v'$ if $v \in V_\medcirc$.
Let $P$ be the set consisting of all pairs $(p(u),p(v))$ such that $v$ is the 
direct successor of $u$ according to ${\preceq}$.
Now we define $\tFN(I,{\preceq}) = (G',k',V_\square')$, where
$V'_\square = V_\square \cup \{g_v^\square, h_v^\square \mid v \in V_\medcirc\} 
\cup \bigcup_{v \in V^+} A_v^\square$,
$k' = \sFN{I}(k)$, and
$G'$ is the graph defined by
\begin{align*}
\vertices{G'} ={}& \vertices{G} \cup H \cup \bigcup_{v \in V^+} A_v,\\
\edges{G'} ={}& \edges{G} \cup \{ (v, v_i), (v,v_i^\square)
  \mid v \in V^+,\; 1 \leq i \leq n+1 \}\\
  {}\cup{}& \bigcup_{v \in V^+,\; 1 \leq i \leq n} v_i \oplus v_{i+1}
  \cup \bigcup_{(u,v) \in P} u_{n+1} \oplus v_1\\
  {}\cup{}& \bigcup_{v \in V_\medcirc} v_{n+1} \oplus g_v
  \cup \{ (v',g_v), (v',h_v), (g_v,h_v), (g_v,h_v^\square) \mid v \in 
  V_\medcirc \}.
\end{align*}
We illustrate our construction in Figure~\ref{fig:necessary-reduction-gadget1} and~\ref{fig:necessary-reduction-gadget2}.
\label{def:dafn-to-daf}
\end{definition}

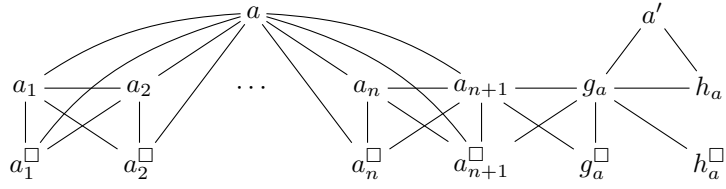
\begin{figure}
\centering
\begin{tikzpicture}
[xscale=1.5,inner sep=2pt,bend angle=20]
\node (a) at (2,2) {$a$};

\node (a1) at (0,1) {$a_1$};
\node (a2) at (1,1) {$a_2$};
\node (dots) at (2,1) {$\cdots$};
\node (an) at (3,1) {$a_n$};
\node (an1) at (4,1) {$a_{n+1}$};

\node (a1s) at (0,0) {$a_1^\square$};
\node (a2s) at (1,0) {$a_2^\square$};
\node (ans) at (3,0) {$a_n^\square$};
\node (an1s) at (4,0) {$a_{n+1}^\square$};

\node (ga) at (5,1) {$g_a$};
\node (gas) at (5,0) {$g_a^\square$};
\node (has) at (6,0) {$h_a^\square$};
\node (ha) at (6,1) {$h_a$};
\node (ap) at (5.5,2) {$a'$};

\draw [bend right] (a) to (a1);
\draw (a) -- (a2);
\draw (a) -- (an);
\draw [bend left] (a) to (an1);

\draw (ap) -- (ga);
\draw (ap) -- (ha);

\draw (a1) -- (a1s);
\draw (a2) -- (a2s);
\draw (an) -- (ans);
\draw (an1) -- (an1s);
\draw (ga) -- (gas);

\draw (a1) -- (a2);
\draw (an) -- (an1) -- (ga) -- (ha);
\draw (a1) -- (a2s);
\draw (a2) -- (a1s);
\draw (an) -- (an1s);
\draw (an1) -- (ans);
\draw (an1) -- (gas);
\draw (ga) -- (an1s);
\draw (has) -- (ga);

\draw [bend left] (a1s) to (a);
\draw (a2s) -- (a);
\draw (ans) -- (a);
\draw [bend right] (an1s) to (a);
\end{tikzpicture}
\caption{Illustration of the gadget that makes sure that every solution containing $a$ also contains $f_a$, $g_a$ and $a'$.
The vertex $a$ is a non-necessary, non-forbidden vertex from the \SSFN\ instance and may have other neighbors from this instance.
The vertex $a'$ additionally has the neighbors depicted in Figure~\ref{fig:necessary-reduction-gadget2}.}
\label{fig:necessary-reduction-gadget1}
\end{figure}
\begin{figure}
\centering
\begin{tikzpicture}
[inner sep=2pt,bend angle=20]
\node (x) at (2,2) {$x$};

\node (x1) at (0,1) {$x_1$};
\node (x2) at (1,1) {$x_2$};
\node (xdots) at (2,1) {$\cdots$};
\node (xn) at (3,1) {$x_n$};
\node (xn1) at (4,1) {$x_{n+1}$};

\node (x1s) at (0,0) {$x_1^\square$};
\node (x2s) at (1,0) {$x_2^\square$};
\node (xns) at (3,0) {$x_n^\square$};
\node (xn1s) at (4,0) {$x_{n+1}^\square$};

\node (y) at (6,2) {$y$};

\node (y1) at (5,1) {$y_1$};
\node (ydots) at (6,1) {$\cdots$};
\node (yn1) at (7,1) {$y_{n+1}$};

\node (y1s) at (5,0) {$y_1^\square$};
\node (yn1s) at (7,0) {$y_{n+1}^\square$};

\node (ap) at (9,2) {$a'$};

\node (ap1) at (8,1) {$a'_1$};
\node (apdots) at (9,1) {$\cdots$};
\node (apn1) at (10,1) {$a'_{n+1}$};

\node (ap1s) at (8,0) {$a'^\square_1$};
\node (apn1s) at (10,0) {${a'^\square_{n+1}}$};

\node (bp) at (12,2) {$b'$};

\node (bp1) at (11,1) {$b'_1$};
\node (bpdots) at (12,1) {$\cdots$};
\node (bpn1) at (13,1) {$b'_{n+1}$};

\node (bp1s) at (11,0) {$b'^\square_1$};
\node (bpn1s) at (13,0) {$b'^\square_{n+1}$};

\draw (x2s) -- (x1) -- (x2) -- (x1s);
\draw (xn) -- (xn1) -- (y1);
\draw (xn) -- (xn1s) -- (y1);
\draw (xns) -- (xn1) -- (y1s);
\draw (ap1s) -- (yn1) -- (ap1) -- (yn1s);
\draw (bp1s) -- (apn1) -- (bp1) -- (apn1s);

\foreach \x in {x} {
  \draw [bend right] (\x) to (\x 1);
  \draw [bend right] (\x) to (\x 1s);
  \draw [bend left] (\x) to (\x n1);
  \draw [bend left] (\x) to (\x n1s);

  \draw (\x) -- (\x 2);
  \draw (\x) -- (\x 2s);
  \draw (\x) -- (\x n);
  \draw (\x) -- (\x ns);

	\foreach \y in {1,2,n,n1} {
		\draw (\x\y) -- (\x\y s);
	}
}

\foreach \x in {y,ap,bp} {
  \draw (\x) to (\x 1s);
  \draw (\x) to (\x n1s);

	\foreach \y in {1,n1} {
    \draw (\x) -- (\x\y) -- (\x\y s);
  }
}
\end{tikzpicture}
\caption{Illustration of the gadget that makes sure that every solution contains all necessary vertices if it contains some necessary vertex or if it contains $v'$ for some non-necessary vertex $v$.
Here we assume there are the four vertices $a,b,x,y$, among which $x$ and $y$ are necessary, and we use the ordering $x \preceq y \preceq a \preceq b$.}
\label{fig:necessary-reduction-gadget2}
\end{figure}
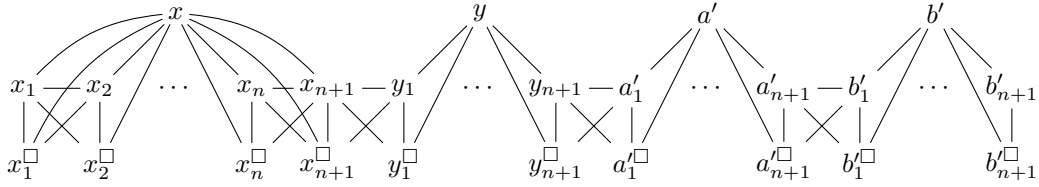

We now prove that $\tFN$ yields a correct reduction for any ordering $\preceq$.

\begin{lemma}
Let $I = (G,k,V_\square,V_\triangle)$ be a \DAFN{} instance,
let ${\preceq}$ be an ordering of $\vertices{G} \setminus V_\square$,
let $A$ be the set of solutions of $I$ and
let $B$ be the set of solutions of the \DAF{} instance $\tFN(I,{\preceq})$.
There is a bijection $f: A \to B$ such that
$\size{f(S)} = \sFN{I}(\size{S})$
holds for every $S \in A$.
\label{lem:dafn-to-daf-correct}
\end{lemma}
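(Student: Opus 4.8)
The plan is to define the bijection $f$ explicitly and then verify the three required properties in turn: that $f(S)$ is a genuine \DAF{} solution of $\tFN(I,{\preceq})$, that $f$ is injective, and that $f$ is surjective. For the forward direction, given a solution $S$ of $I$, I would set $f(S)$ to be $S$ together with all the ``auxiliary'' vertices that every solution is forced to contain, namely all primed copies $\{v' \mid v \in V_\medcirc\}$ and all $h_v$ for $v \in V_\medcirc$, plus $g_v$ exactly when $v \in S$, plus the chain vertices $v_i \in A_v^\medcirc$ exactly when $v \in S$ (reading ``$v \in S$'' as ``$p^{-1}(v) \in S$ or $v \in V_\triangle$'' for the primed members of $V^+$). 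A short counting argument then gives $\size{f(S)} = \size{S} + \size{V_\medcirc} \cdot 2 + (\text{chain contributions})$; matching this against the stated formula $\sFN{I}(x) = (n+3)(x + \size{V_\medcirc}) - \size{V_\triangle}$ should be routine bookkeeping, since each vertex of $V^+$ that lands in $f(S)$ contributes its $n+1$ chain vertices, and $\size{V^+} = \size{V_\triangle} + 2\size{V_\medcirc}$ while the elements of $S$ among the $v$ and $v'$ account for the remaining count.

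To check that $f(S)$ is a defensive alliance in $G'$, I would examine each vertex $x \in f(S)$ by cases according to which gadget it belongs to. For $x \in S \subseteq \vertices{G}$, I would reuse the defense $\mu$ witnessing that $S$ is a defensive alliance in $G$, extending it to a defense $\mu'$ in $G'$ exactly as in the proof of Lemma~\ref{lem:dafnc-to-dafn-correct}: the new forbidden neighbors $x_i^\square \in A_x^\square$ are each repelled by the matching new defender $x_i \in A_x^\medcirc$, which lies in $f(S)$ since $x \in S$, and the old attackers keep their old defenders. For the auxiliary vertices ($v'$, $h_v$, $g_v$, the chain vertices $v_i$) the verification is a purely local count using the $\oplus$-gadget structure and Observations~\ref{obs:one-in-s} and~\ref{obs:all-in-s}: each such vertex has its forbidden $\square$-twin as an attacker and enough neighbours inside $f(S)$ (its chain neighbours, plus $v'$ or $h_v$ as appropriate) to repel it. Injectivity of $f$ is immediate because $S = f(S) \cap \vertices{G} \setminus H \setminus \bigcup_v A_v$, or more simply $S$ is recovered as $f(S) \cap \vertices{G}$.

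The main work — and the step I expect to be the real obstacle — is surjectivity, i.e.\ showing every solution $S'$ of $\tFN(I,{\preceq})$ has the form $f(S)$ for some solution $S$ of $I$. Here one must argue that the gadgets propagate membership exactly as intended. The key chain of implications, established by repeated use of Observations~\ref{obs:one-in-s} and~\ref{obs:all-in-s} along the $\oplus$-chains (whose $\square$-vertices are forbidden), is: if any $v \in V^+$ is in $S'$ then all of $A_v^\medcirc$ is in $S'$, and conversely; for $v \in V_\medcirc$ this drags in $g_v$, hence forces $v' \in S'$ via the edges $(v',g_v),(v',h_v),(g_v,h_v),(g_v,h_v^\square)$ — in fact $h_v$ is forced in unconditionally since it has the forbidden attacker $h_v^\square$. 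Then the $P$-chains linking $u_{n+1}$ to $v_1$ for consecutive $u,v$ in ${\preceq}$ transmit membership along the whole order, so that once any necessary vertex (or any $v'$) is in $S'$, all of $V^+$ — in particular all of $V_\triangle$ and all primed copies — is in $S'$; and since $V_\triangle \ne \emptyset$ forces this in the nontrivial case, one concludes $\{v' \mid v \in V_\medcirc\} \cup \{h_v \mid v \in V_\medcirc\} \subseteq S'$ and $v \in S' \iff g_v \in S'$. Setting $S = S' \cap \vertices{G}$, one checks $S' = f(S)$, that $V_\square \cap S = \emptyset$ and $1 \le \size{S} \le k$ (using invertibility of $\sFN{I}$ for the size bound, exactly as in Lemma~\ref{lem:edafn-wone-hard}), and finally that $S$ is a defensive alliance in $G$ by the same cancellation argument as before: for $x \in S$, $N_{G'}[x] \cap S'$ exceeds $N_G[x] \cap S$ by the set $A_x^\medcirc$ and $N_{G'}[x] \setminus S'$ exceeds $N_G[x] \setminus S$ by the equally large set $A_x^\square$, so the defensive-alliance inequality descends from $G'$ to $G$.
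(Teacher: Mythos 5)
Your overall strategy is the same as the paper's, but your definition of $f$ contains a concrete error that breaks all three properties you set out to verify. You include the chain $A_{v'}^\medcirc$ attached to a primed copy $v'$ only when $v \in S$, whereas it must be included for \emph{every} $v \in V_\medcirc$: the $P$-chains force $\bigcup_{v \in V_\medcirc} A_{v'}^\medcirc$ into every nonempty solution of $\tFN(I,{\preceq})$ --- your own surjectivity sketch derives exactly this --- so with your definition $f(S)$ is not even a member of $B$ whenever $V_\medcirc \setminus S \neq \emptyset$. Concretely, for $v \in V_\medcirc \setminus S$ the vertex $v'$ has closed neighborhood $\{v'\} \cup A_{v'}^\medcirc \cup A_{v'}^\square \cup \{g_v, h_v\}$; under your definition its only defenders are $v'$ and $h_v$, against the $2n+3$ attackers in $A_{v'}^\medcirc \cup A_{v'}^\square \cup \{g_v\}$, so $f(S)$ is not a defensive alliance. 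The size bookkeeping fails as well: your $f$ gives $(n+2)(2\size{S} - \size{V_\triangle}) + 2\size{V_\medcirc}$ rather than $\sFN{I}(\size{S}) = (n+3)(\size{S} + \size{V_\medcirc}) - \size{V_\triangle}$, whose term $\size{V_\medcirc}\cdot(n+1)$ comes precisely from including all primed chains unconditionally. The fix is to add $\bigcup_{v \in V_\medcirc} A_{v'}^\medcirc$ to $f(S)$ for all $v \in V_\medcirc$; after that correction your verification goes through essentially as in the paper.

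Two smaller slips in your surjectivity sketch. First, $h_v$ is not adjacent to $h_v^\square$ (only $g_v$ is), so $h_v$ is not ``forced in unconditionally'' by a forbidden attacker; the correct argument is that $h_v \notin S'$ would entail $g_v \notin S'$, and then $v' \in S'$ would have $n+3$ attackers ($g_v$, $h_v$ and $A_{v'}^\square$) but only the $n+2$ defenders $\{v'\} \cup A_{v'}^\medcirc$. Second, the propagation along the $P$-chains forces membership of $V_\triangle \cup \{v' \mid v \in V_\medcirc\}$ and their chains, not of all of $V^+$: the unprimed vertices of $V_\medcirc$ must remain free, since otherwise the reduction would force every non-forbidden vertex into every solution.
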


\begin{proof}
We use the same auxiliary notation as in Definition~\ref{def:dafn-to-daf}
and we define $f$ as
\[f(S) = S
\cup \bigcup_{v \in S} A_v^\medcirc
\cup \{v', h_v \mid v \in V_\medcirc\}
\cup \bigcup_{v \in V_\medcirc} A_{v'}^\medcirc
\cup \{g_v \mid v \in S \cap V_\medcirc\}
.\]
For every $S \in A$,
we thus obtain
$\size{f(S)}
= \size{S} + \size{S} (n+1) + 2 \size{V_\medcirc} + \size{V_\medcirc} \cdot 
(n+1) + (\size{S}-\size{V_\triangle})
= \sFN{I}(\size{S})$,
and we first show that indeed $f(S) \in B$.

Let $S \in A$ and let
$S'$ denote $f(S)$.
Obviously $S'$ satisfies
$V_\square' \cap S' = \emptyset$.
To see that $S'$ is a defensive alliance in $G'$,
let $x$ be an arbitrary element of $S'$.
If $x \in S$, then there is a defense
$\mu: N_G[x] \setminus S \to N_G[x] \cap S$
since $S$ is a defensive alliance in $G$.
We use this to construct a defense
$\mu': N_{G'}[x] \setminus S' \to N_{G'}[x] \cap S'$.
For any attacker $a$ of $x$ in $G'$, we distinguish the following cases:
\begin{itemize}
\item If $a$ is some $v_i^\square \in A_v^\square$ for some $v \in V^+$,
then $x$ is either $v_i$ or a neighbor of $v_i$, all of which are in $S'$,
and we set $\mu'(a) = v_i$.
\item Similarly, if $a$ is $g_v^\square$ for some $v \in V_\medcirc$,
then we set $\mu'(a) = g_v$.
\item If $a$ is $h_v^\square$ for some $v \in V_\medcirc$,
then $x = g_v$ and we set $\mu'(a) = h_v$.
\item If $a$ is $g_v$ for some $v \in V_\medcirc$,
then $x$ is either $v'$ or $h_v$,
which is not used for repelling any other attack because $h_v^\square$ cannot 
attack $x$,
so we set $\mu'(a) = h_v$.
\item Otherwise $a$ is in $N_G[x] \setminus S$ (by our construction of $S'$).
Since the codomain of $\mu$ is a subset of the codomain of $\mu'$, we may set $\mu'(a) = \mu(a)$.
\end{itemize}
Since $\mu'$ is injective, each attack on $x$ in $G'$ can be repelled by $S'$.
Hence $S'$ is a defensive alliance in $G'$.

Clearly $f$ is injective.
It remains to show that $f$ is surjective.
Let $S'$ be a solution of $\tFN(I,{\preceq})$.
We first show that $V_\triangle \cup \{v', h_v \mid v \in V_\medcirc\} 
\subseteq S'$:
\begin{itemize}
\item If $S'$ contains some $v \in V^+$, then
$S'$ contains an element of $A_v^\medcirc$
by Observation~\ref{obs:one-in-s}.

\item If $S'$ contains an element of $A_v^\medcirc$ for some $v \in V^+$, then
$\{v\} \cup A_v^\medcirc \subseteq S'$
by Observation~\ref{obs:all-in-s}.

\item If $v_{n+1} \in S'$ for some $v \in V_\medcirc$, then
$g_v \in S'$ for the same reason.

\item Furthermore, if $S'$ contains an element of $A_v^\medcirc$ for some $v 
  \in V_\triangle \cup \{v' \mid v \in V_\medcirc\}$, then
also $A_{u}^\medcirc \subseteq S'$ for every $u \in V_\triangle \cup \{v' \mid 
v \in V_\medcirc\}$
for the same reason.

\item If $g_v \in S'$ for some $v \in V_\medcirc$, then
$\{h_v,v',v_{n+1}\} \subseteq S'$ by Observation~\ref{obs:all-in-s}.

\item If $h_v \in S'$ for some $v \in V_\medcirc$, then
$a' \in S'$ because at least $g_v$ or $v'$ must be in $S'$
and the former implies $v' \in S'$ as we have seen.

\item
Since $S'$ is nonempty, the previous observations show that
for every $v \in V_\triangle \cup \{v' \mid v \in V_\medcirc\}$
it holds that
$\{v\} \cup A_v^\medcirc \subseteq S'$.
Finally, we show that $\{h_v \mid v \in V_\medcirc\} \subseteq S'$.
Suppose, for the sake of contradiction, that there is some
$v \in V_\medcirc$ such that
$h_v \notin S'$.
We have seen that the latter can only be the case if $g_v \notin S'$, and we 
know that $v' \in S'$.
We obtain the contradiction that $v'$ is attacked by $g_v$, $h_v$ and 
$A_{v'}^\square$, whereas its only defenders are $v'$ itself and 
$A_{v'}^\medcirc$.
\end{itemize}
Let $S = S' \cap \vertices{G}$.
By the previous observations, it is easy to see that $S' = f(S)$.
It remains to show that $S$ is a defensive alliance in $G$.
Let $x$ be an arbitrary element of $S$.
We observe that $N_{G'}[x] \cap S' = (N_G[x] \cap S) \cup
A_x^\medcirc$ and similarly
$N_{G'}[x] \setminus S' = (N_G[x] \setminus S) \cup
A_v^\square$.
Since $\size{A_x^\medcirc} = \size{A_x^\square}$, this implies
$\size{N_{G'}[x] \cap S'} - \size{N_G[x] \cap S} = \size{N_{G'}[x] \setminus 
S'} - \size{N_G[x] \setminus S}$.
Since $S'$ is a defensive alliance in $G'$ and $x \in S'$,
it holds that $\size{N_{G'}[x] \cap S'} \geq \size{N_{G'}[x] \setminus S'}$.
We conclude
that
$\size{N_G[x] \cap S} \geq \size{N_G[x] \setminus S}$.
Hence $S$ is a defensive alliance in $G$.
\end{proof}

Given an ordering ${\preceq}$, clearly $\tFN(I,\preceq)$ is computable in 
polynomial time.
We can thus easily obtain a reduction from \DAFN{} to \DAF{}
by first computing an arbitrary ordering ${\preceq}$ of the non-forbidden 
vertices.
%
We next show that by choosing $\preceq$ appropriately, this amounts to an FPT 
reduction that preserves bounded treewidth.

\begin{lemma}
\DAF{}, parameterized by the treewidth of the graph, is \Wone{}-hard.
\label{lem:daf-wone-hard}
\end{lemma}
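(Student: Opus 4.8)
The plan is to realize the reduction $\tFN$ from \DAFN{} to \DAF{} with a carefully chosen ordering $\preceq$. By Lemma~\ref{lem:dafn-to-daf-correct}, for every ordering $\preceq$ of the non-forbidden vertices of the graph $G$ of the given \DAFN{} instance $I$, the solutions of $I$ are in a size-preserving bijection with those of the \DAF{} instance $\tFN(I,\preceq)$; since $\sFN{I}$ is strictly increasing, $\tFN(I,\preceq)$ is a yes-instance iff $I$ is. As $\tFN(I,\preceq)$ is computable in polynomial time once $\preceq$ is fixed, and \DAFN{} parameterized by treewidth is \Wone-hard by Lemma~\ref{lem:dafn-wone-hard}, it suffices to produce, in FPT time, an ordering $\preceq$ for which the graph $G'$ of $\tFN(I,\preceq)$ has treewidth bounded by a function of the treewidth $w$ of $G$.

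First I would fix an optimal nice tree decomposition $\calT = (T,\chi)$ of $G$ (computable in FPT time) together with a depth-first traversal of $T$ from the root, and let $\preceq$ order the non-forbidden vertices $v$ of $G$ by the moment at which their topmost node $t_v^\calT$ is first visited by the traversal (breaking ties inside a single node arbitrarily). The vertices added by $\tFN$ fall into two kinds. For each non-forbidden $v$ there is a \emph{local gadget}: the fan $A_v$ attached to $v$ together with $v'$, $g_v$, $h_v$, their forbidden copies and the fan $A_{v'}$ in case $v$ is non-necessary. Each such gadget is a tree-like blob of bounded treewidth hanging off $v$, and apart from $v$ itself the only vertices of it having neighbours outside the gadget are the two ends $p(v)_1$ and $p(v)_{n+1}$ of $A_{p(v)}$. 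The remaining new edges are the connector blobs $p(u)_{n+1} \oplus p(v)_1$ over pairs $u \prec v$ consecutive in $\preceq$, which chain the fans $A_{p(v)}$ into one long path $Q$ visiting the local gadgets in the order $\preceq$.

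Then I would build a tree decomposition $\calT'$ of $G'$ by keeping the tree $T$ and enlarging its bags. For each non-forbidden $v$ I would graft off $t_v^\calT$ a small sub-branch $S_v$ with $\O(1)$-size bags that realizes the local gadget of $v$ and routes $Q$ through $A_{p(v)}$: it picks up the current end $p(u)_{n+1}$ of $Q$ (and its forbidden copy) that the traversal has delivered to $t_v^\calT$, attaches it to $p(v)_1$ via the connector blob, and exposes $p(v)_{n+1}$ as the new end. In addition, to each $t \in T$ I would add to $\chi(t)$ the ``$Q$-tokens'' passing through $t$, i.e.\ those ends $p(u)_{n+1}$ together with their forbidden copies that are the current end of $Q$ at some instant when the traversal is located at $t$. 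The correctness of $\calT'$ then rests on two observations. Since at every instant the traversal carries exactly one end of $Q$, the occurrences of each token form a connected ``thread'' of bags running along the traversal from the sub-branch where that end is created to the sub-branch where it is picked up, which is exactly what the connectivity condition demands for a connector blob (and, together with the grafted sub-branches for the local gadgets and the bags of $\calT$ for $\edges{G}$, this makes $\calT'$ a valid tree decomposition of $G'$). And since $\calT$ is nice, every node of $T$ has at most two children, hence is visited a constant number of times by the traversal, so only a constant number of distinct tokens pass through it; consequently every bag of $\calT'$ is either $\chi(t)$ plus $\O(1)$ tokens or an $\O(1)$-size gadget bag, so the width of $\calT'$ is $\O(w)$, and $\calT'$ has polynomially many nodes.

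The main obstacle is the bookkeeping behind these token threads: one has to choose the attachment points of the sub-branches $S_v$ and the extent of each thread so that every vertex of a connector blob — in particular the forbidden copies $p(u)_{n+1}^\square$, which are adjacent to $p(u)_{n+1}$, $p(u)_n$ and $p(u)$ inside $A_{p(u)}$ but also to $p(v)_1$ across the connector — occupies a connected set of bags, all while keeping every bag of size $\O(w)$. The remaining verifications (that $\calT'$ satisfies the three conditions of a tree decomposition of $G'$, and that the whole reduction runs in FPT time and produces an instance whose parameter is $\O(w)$) are routine.
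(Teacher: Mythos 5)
Your proposal is correct and follows essentially the same route as the paper: it applies $\tFN$ with an ordering $\preceq$ read off a DFS traversal of a nice tree decomposition of $G$, grafts the local gadgets at the nodes $t_v^{\calT}$, and threads the connector vertices $p(u)_{n+1}, p(u)_{n+1}^\square$ (the paper threads $p(v)_1, p(v)_1^\square$ instead, which is symmetric) along the traversal segment between consecutive topmost nodes, using the fact that a node of a nice decomposition has at most two children and is therefore visited $\O(1)$ times, so each bag grows by $\O(1)$. The only differences from the paper's proof are cosmetic (first-visit order versus the paper's post-order/last-occurrence order, and $\O(1)$-size grafted gadget bags versus the paper's in-place chains with bags $B_v \cup \{v_i, v_i^\square, v_{i+1}, v_{i+1}^\square\}$), and the bookkeeping you defer is indeed routine.
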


\begin{proof}
Let $I = (G,k,V_\square,V_\triangle)$ be a \DAFN{} instance
and let $\calT$ be an optimal nice tree decomposition of $G$.
We can compute such a tree decomposition in
FPT time~\cite{siamcomp:Bodlaender96}.
Let ${\preceq}$ be 
the
ordering of the elements of $V_\triangle \cup V_\medcirc$
that is obtained in linear time by doing a post-order traversal of $\calT$ and 
sequentially recording the elements that occur for the last time in the current 
bag.
We obtain the \DAF{} instance $\tFN(I,\preceq)$, whose graph we denote by $G'$.
This reduction is correct, as shown in Lemma~\ref{lem:dafn-to-daf-correct},
and computable in FPT time.
It remains to show that the treewidth of $G'$ is bounded by a function of the treewidth of $G$.
To this end, we use $\calT$ to build a tree decomposition $\calT'$ of $G'$.
We initially set $\calT' := \calT$ and modify it by the following steps:
\begin{enumerate}
  \item For each $v \in V_\medcirc$, we add $g_v$, $g_v^\square$, $h_v$, 
    $h_v^\square$ and $v'$ to the bag of $t_v^{\calT'}$.
    Note that afterwards $t_v^{\calT'} = t_{v'}^{\calT'}$.
    After this step we increased the width of $\calT'$ by at most five.

  \item For each $v \in V^+$, we use $B_v$ to denote the bag of $t_v^{\calT'}$ 
    and replace $t_v^{\calT'}$ by a chain of nodes $N_1, \dots, N_n$, where 
    $N_n$ is the topmost node and the bag of $N_i$ is $B_v \cup \{v_i, 
    v_i^\square, v_{i+1}, v_{i+1}^\square\}$.
    After this step we increased the width of $\calT'$ by at most nine.
    Note that the bag of the new node $t_v^{\calT'}$ now contains $v_{i+1}$ and 
    $v_{i+1}^\square$.
    We have so far covered all edges except the ones connecting elements of two 
    different sets $A_x$ and $A_y$ for $(x,y) \in P$.

  \item For every $(u,v) \in P$, we add $v_1$ and $v_1^\square$ into the bag of 
    every node between (and including) $t_u^{\calT'}$ and $t_{v_1}^{\calT'}$.
    Note that this preserves connectedness and afterwards the bag of 
    $t_u^{\calT'}$ contains $u_{i+1}$, $u_{i+1}^\square$, $v_1$ and 
    $v_1^\square$, thus covering the remaining edges.
    After this step we increased the width of $\calT'$ by at most 13.
    (Since the number of children of each tree decomposition node is at most 
    two, this step enlarges every bag at most twice.)
\end{enumerate}
It is easy to verify that $\calT'$ is a valid tree decomposition of $G'$.
Furthermore, the width of $\calT'$ is at most the width of $\calT$ plus 13.
\end{proof}

We again get an analogous result for the exact variant.

\begin{corollary}
\EDAF{}, parameterized by the treewidth of the graph, is \Wone{}-hard.
\label{cor:edaf-wone-hard}
\end{corollary}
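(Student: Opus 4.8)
The plan is to imitate the proof of Lemma~\ref{lem:edafn-wone-hard}, but using the transformation $\tFN$ in place of $\tFNC$. Since \EDAFN{} parameterized by treewidth is \Wone{}-hard by Lemma~\ref{lem:edafn-wone-hard}, it suffices to exhibit an FPT reduction from \EDAFN{} to \EDAF{} that preserves bounded treewidth, and the construction $\tFN$ used in the proof of Lemma~\ref{lem:daf-wone-hard} is the natural candidate.

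First I would observe that the instances of \DAFN{} and \EDAFN{} are the same objects, and likewise for \DAF{} and \EDAF{}; hence $\tFN$ applies unchanged to an \EDAFN{} instance $I = (G,k,V_\square,V_\triangle)$, producing an \EDAF{} instance $\tFN(I,{\preceq}) = (G',k',V_\square')$ with $k' = \sFN{I}(k)$, where the ordering ${\preceq}$ of the non-forbidden vertices is obtained from an optimal nice tree decomposition $\calT$ of $G$ exactly as in the proof of Lemma~\ref{lem:daf-wone-hard}. Computing $\calT$ takes FPT time and the rest is polynomial, so the map is computable in FPT time; moreover, the tree decomposition $\calT'$ of $G'$ constructed in that proof witnesses that the treewidth of $G'$ is at most the treewidth of $G$ plus $13$, so bounded treewidth is preserved.

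For correctness I would invoke Lemma~\ref{lem:dafn-to-daf-correct}, which provides a bijection $f$ between the solutions of $I$ (viewed as a \DAFN{} instance) and the solutions of $\tFN(I,{\preceq})$ (viewed as a \DAF{} instance) with $\size{f(S)} = \sFN{I}(\size{S})$. Writing $n = \size{\vertices{G}}$, the function $\sFN{I}\colon x \mapsto (n+3)\cdot(x + \size{V_\medcirc}) - \size{V_\triangle}$ is affine with positive leading coefficient $n+3$, hence strictly increasing and therefore injective on the nonnegative integers; consequently $\size{S} = k$ holds if and only if $\size{f(S)} = \sFN{I}(k) = k'$, so $f$ restricts to a bijection between the size-exactly-$k$ solutions of $I$ and the size-exactly-$k'$ solutions of $\tFN(I,{\preceq})$. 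Thus $I$ is a positive \EDAFN{} instance if and only if $\tFN(I,{\preceq})$ is a positive \EDAF{} instance, and the \Wone{}-hardness of \EDAF{} parameterized by treewidth follows.

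The only subtle point --- and the one I would single out as the main, albeit mild, obstacle --- is that in Lemma~\ref{lem:dafn-to-daf-correct} the notion of ``solution'' already incorporates the bound ``size at most $k$'' (respectively ``at most $k'$''), so one must argue explicitly that the strict monotonicity of the affine function $\sFN{I}$ lets one pass from this ``at-most'' bijection to a bijection between solutions of exactly the prescribed sizes. Once this is noted, the statement reduces to a routine reuse of Lemmas~\ref{lem:edafn-wone-hard}, \ref{lem:daf-wone-hard} and~\ref{lem:dafn-to-daf-correct}.
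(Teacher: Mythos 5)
Your proposal is correct and matches the paper's intended argument: the paper proves this corollary implicitly by the same reasoning as Lemma~\ref{lem:edafn-wone-hard}, namely that the bijection of Lemma~\ref{lem:dafn-to-daf-correct} together with the invertibility of $\sFN{I}$ carries exact-size solutions to exact-size solutions, while the treewidth bound is inherited from the proof of Lemma~\ref{lem:daf-wone-hard}. Your explicit remark about passing from the ``at most $k$'' bijection to the exact-size one is exactly the small point the paper leaves to the reader.
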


\subsection{Hardness of Defensive Alliance}
\label{sec:da-wone-tw}

We now introduce a transformation $\tF$ that eliminates forbidden vertices.
The basic idea is that we ensure that a forbidden vertex $f$ is never part of a 
solution by adding so many neighbors to $f$ that we could only defend $f$ by 
exceeding the bound on the solution size.

\begin{definition}
We define a function \tF{}, which assigns a \DA{} instance to each \DAF{} 
instance
$I = (G,k,V_\square)$.
\label{def:daf-to-da}
For each $f \in V_\square$, we
introduce new vertices $f', f_1, \dots, f_{2k}$.
Now we define $\tF(I) = (G',k)$, where
$G'$ is the graph defined by
\begin{align*}
\vertices{G'} ={}& \vertices{G} \cup \{f', f_1, \dots, f_{2k} \mid f \in V_\square\},\\
\edges{G'} ={}& \edges{G} \cup \{ (f,f_i),\; (f',f_i) \mid f \in V_\square,\; 1 \leq i \leq 2k\}.
\end{align*}
\end{definition}

We now prove that $\tF$ yields a correct reduction from \DAF{} to \DA{}.

\begin{lemma}
Every \DAF{} instance $I$ has the same solutions as the \DA{} instance 
$\tF(I)$.
\label{lem:daf-to-da-correct}
\end{lemma}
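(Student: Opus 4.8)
The plan is to prove the two directions separately, showing that the set of solutions of $I = (G,k,V_\square)$ coincides exactly with the set of solutions of $\tF(I) = (G',k)$. Since $\tF$ only adds new vertices and edges incident to vertices of $V_\square$, the restriction of any candidate in $G'$ to $\vertices{G}$ is the natural object to compare against a candidate in $G$.

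First I would show that every solution $S$ of $I$ is a solution of $\tF(I)$. Such an $S$ is a defensive alliance in $G$ with $1 \leq \size{S} \leq k$ and $S \cap V_\square = \emptyset$. Since $S$ avoids $V_\square$, it contains none of the newly introduced vertices, and none of the new vertices $f', f_i$ is adjacent in $G'$ to any vertex outside $\{f\} \cup \{f_1,\dots,f_{2k}\} \cup \{f'\}$ for the corresponding $f$. Hence for every $x \in S$ we have $N_{G'}[x] = N_G[x]$: indeed $x \notin V_\square$, so $x$ is not one of the $f$'s, and the only vertices whose neighborhoods changed are the $f \in V_\square$ themselves (which gained the neighbors $f_i$). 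Therefore $\size{N_{G'}[x] \cap S} = \size{N_G[x] \cap S} \geq \size{N_G[x] \setminus S} = \size{N_{G'}[x] \setminus S}$, so $S$ is a defensive alliance in $G'$, and the size bound is unchanged.

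Conversely, I would show that every solution $S'$ of $\tF(I)$ is a solution of $I$. The key claim is that $S'$ contains no vertex of $V_\square$ and no vertex of $\{f', f_1,\dots,f_{2k} \mid f \in V_\square\}$. For the first part: suppose $f \in V_\square \cap S'$. In $G'$, the vertex $f$ has the $2k$ neighbors $f_1,\dots,f_{2k}$ (in addition to its $G$-neighbors), and these $2k$ vertices are more than half of $N_{G'}[f]$ only if $f$ has at most $2k-1$ other closed-neighborhood members — but more carefully, by Observation~\ref{obs:one-in-s} or a direct counting argument, defending $f$ would require at least $k$ of the $f_i$ to lie in $S'$ (since each $f_i$ is an attacker unless it is in $S'$, and $f$ has no defenders among the $f_i$ beyond those in $S'$), forcing $\size{S'} \geq k+1$, contradicting $\size{S'} \leq k$. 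For the new vertices: an $f_i \in S'$ would by a similar argument force $f$ or $f'$ into $S'$ and then propagate, but the cleanest route is to observe that $f_i$'s only neighbors are $f$ and $f'$, so $f_i \in S'$ demands $\{f,f'\} \subseteq S'$ by Observation~\ref{obs:all-in-s}, and we have just excluded $f \in S'$; similarly $f' \in S'$ forces enough of the $f_i$ into $S'$ to break the size bound. Once $S'$ avoids all new vertices and all of $V_\square$, we set $S = S' = S' \cap \vertices{G}$, and the neighborhood-equality argument from the first direction runs in reverse to show $S$ is a defensive alliance in $G$ with $S \cap V_\square = \emptyset$ and $1 \leq \size{S} \leq k$.

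The main obstacle is the bookkeeping in the converse direction: one must be careful that the $2k$ new neighbors of a forbidden vertex $f$ genuinely rule $f$ out even when $f$ already had many neighbors in $G$ (the bound $2k$ is chosen precisely so that a defense of $f$ would need $\geq k$ helpers drawn from $S'$, which is impossible when $\size{S'}\leq k$ and $f$ itself would also have to be counted), and that this exclusion then cascades correctly to the auxiliary vertices $f'$ and $f_i$ via Observations~\ref{obs:one-in-s} and~\ref{obs:all-in-s} without circularity. Everything else is routine.
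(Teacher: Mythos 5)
Your proposal is correct and follows essentially the same route as the paper: the forward direction via the observation that the neighborhoods of solution vertices are unchanged by $\tF$, and the converse by noting that $f$ and $f'$ each have at least $2k$ neighbors in $G'$ (so membership would force $\size{S'} \geq k+1$) and that each $f_i$ has only $f$ and $f'$ as neighbors. The paper's proof is just a terser version of the same counting argument.
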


\begin{proof}
Let $I = (G,k,V_\square)$ and $\tF(I) = (G',k)$.
Each solution $S$ of $I$ is also a solution of $\tF(I)$ because the subgraph of 
$G$ induced by $N_G[S]$ is equal to the subgraph of $G'$ induced by 
$N_{G'}[S]$.
Now let $S'$ be a solution of $\tF(I)$.
For every $f \in V_\square$,
neither $f$ nor $f'$ are in $S'$
because each of these vertices has at least $2k$ neighbors,
and $S'$ cannot contain any $f_i$ because $N_{G'}(f_i) = \{f,f'\}$.
Hence $S'$ is also a solution of $I$ as the subgraphs induced by the respective 
neighborhoods are again equal.
\end{proof}

We use $\tF$ to show \Wone{}-hardness of \DA{} by reducing from \DAF{} while 
preserving bounded treewidth.

\begin{lemma}
\DA{}, parameterized by the treewidth of the graph, is \Wone{}-hard.
\label{lem:da-wone-hard}
\end{lemma}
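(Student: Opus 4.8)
The plan is to combine the chain of reductions established so far with the transformation $\tF$ from Definition~\ref{def:daf-to-da}. By Lemma~\ref{lem:daf-wone-hard} we know that \DAF{} parameterized by the treewidth of the graph is \Wone-hard, so it suffices to exhibit an FPT reduction from \DAF{} to \DA{} that preserves bounded treewidth. For the correctness of the reduction map $\tF$ we simply invoke Lemma~\ref{lem:daf-to-da-correct}, which tells us that an instance $I$ and $\tF(I)$ have exactly the same solutions; in particular $I$ is a positive instance of \DAF{} iff $\tF(I)$ is a positive instance of \DA{}. Since $\tF(I)$ is clearly computable in polynomial time (we add $2k+1$ vertices and $4k$ edges per forbidden vertex, and $k$ is part of the input), the only remaining task is the treewidth bound.

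To bound the treewidth, I would start from an optimal tree decomposition $\calT = (T,\chi)$ of $G$, computed in FPT time via~\cite{siamcomp:Bodlaender96}, and modify it into a tree decomposition $\calT'$ of $G'$. For each forbidden vertex $f \in V_\square$, the new vertices $f', f_1, \dots, f_{2k}$ need to be accommodated. The key observation is that each $f_i$ has closed neighborhood $\{f, f', f_i\}$, so it suffices to have, somewhere in the decomposition, a bag containing $\{f, f'\}$ from which we can hang pendant nodes. Concretely: pick an arbitrary node $t_f$ whose bag contains $f$, add $f'$ to $\chi(t_f)$, and attach to $t_f$ a chain (or a family of leaf children) $N_1, \dots, N_{2k}$ where the bag of $N_i$ is $\{f, f', f_i\}$. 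Every new edge $(f,f_i)$ and $(f',f_i)$ is then covered by $\chi(N_i)$, the connectivity condition holds because each $f_i$ appears only in $N_i$ and both $f$ and $f'$ appear in a connected subtree, and each original bag grows by at most one vertex (the addition of $f'$), so the width increases by at most $|V_\square| \le n$ in the crude bound — or, if one is slightly more careful and reuses the same $t_f$ region, by a small additive constant per forbidden vertex; in any case the width of $\calT'$ is bounded by a function of the width of $\calT$ and hence the reduction is an FPT reduction.

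The main thing to get right — and the only place where a naive construction could fail — is the treewidth accounting when several forbidden vertices are handled simultaneously, since adding $f'$ to bags for many different $f$ could in principle blow up the width linearly. The cleanest fix is to observe that $f'$ need only appear in the single bag $\chi(t_f)$ together with $f$ and in the pendant nodes $N_1,\dots,N_{2k}$ hanging off $t_f$; it does not need to propagate through the rest of the tree. So each node of $\calT$ receives at most one extra vertex (namely the unique $f'$ with $t_f$ equal to that node, if any), giving width at most the width of $\calT$ plus one. This yields the desired bound, completing the proof that \DA{} is \Wone-hard parameterized by treewidth.
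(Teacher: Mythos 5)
Your proposal is correct and follows essentially the same route as the paper: invoke Lemma~\ref{lem:daf-to-da-correct} for correctness of $\tF$ and extend a tree decomposition of $G$ by a chain of $2k$ new nodes per forbidden vertex $f$, each bag containing $f'$ and one $f_i$, so the width grows by only an additive constant. The paper inserts the chain (with bags $B \cup \{f', f_i\}$) between a node containing $f$ and its parent, obtaining width plus two, which also sidesteps the bookkeeping you discuss about several forbidden vertices sharing the same host node.
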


\begin{proof}
Let $I = (G,k,V_\square)$ be a \DAF{} instance,
let $G'$ denote the graph of $\tF(I)$
and let $\calT$ be an optimal nice tree decomposition of $G$.
We build a tree decomposition $\calT'$ of $G'$ by modifying a copy of $\calT$ in the following way:
For every $f \in V_\square$, we pick an arbitrary node $t$ in $\calT$ whose bag $B$ contains $f$,
and we add a chain of nodes $N_1, \dots, N_{2k}$ between $t$ and its parent such that,
for $1 \leq i \leq 2k$, the bag of $N_i$ is
$B \cup \{f', f_i\}$.
It is easy to verify that $\calT'$ is a valid tree decomposition of $G'$.
Furthermore, the width of $\calT'$ is at most the width of $\calT$ plus two.
\end{proof}

We again get an analogous result for the exact variant.

\begin{corollary}
\EDA{}, parameterized by the treewidth of the input graph, is \Wone{}-hard.
\label{cor:eda-wone-hard}
\end{corollary}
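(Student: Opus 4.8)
The plan is to reuse the transformation $\tF$ of Definition~\ref{def:daf-to-da} verbatim: an instance $I=(G,k,V_\square)$ of \EDAF{} has exactly the same syntactic shape as a \DAF{} instance, so $\tF(I)=(G',k)$ is defined, and I would now read it as an \EDA{} instance, asking for a defensive alliance of size \emph{exactly} $k$ in $G'$. First I would point out that Lemma~\ref{lem:daf-to-da-correct} already delivers everything needed on the level of solutions: it asserts that $I$ and $\tF(I)$ have \emph{the same} set of solutions, not merely that they are equi-satisfiable. Inspecting its proof, the size bound is used only to rule out that any $f\in V_\square$, any copy $f'$, or any gadget vertex $f_i$ belongs to a solution $S'$ of $\tF(I)$; that argument relies solely on $|S'|\le k$ (in fact $|S'|<2k$ would suffice), which holds for every \EDAF{}/\EDA{} solution just as for a \DAF{}/\DA{} solution. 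Hence the size-exactly-$k$ solutions of the \EDAF{} instance $I$ coincide with the size-exactly-$k$ solutions of the \EDA{} instance $\tF(I)$, so $\tF$ is a correct reduction for the exact variant as well.

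Next I would handle the parameter. The tree-decomposition construction in the proof of Lemma~\ref{lem:da-wone-hard} turns an optimal nice tree decomposition of $G$ into one of $G'$ whose width is at most the width of $\calT$ plus two, and this construction is indifferent to whether the instances are interpreted exactly or not. Composing $\tF$ with Corollary~\ref{cor:edaf-wone-hard}, which states that \EDAF{} parameterized by treewidth is \Wone{}-hard, we obtain an FPT reduction from a \Wone{}-hard problem to \EDA{} that keeps the treewidth of the produced instance bounded by a function of the original treewidth. Therefore \EDA{}, parameterized by the treewidth of the input graph, is \Wone{}-hard.

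I do not expect a genuine obstacle here, since the substantive work was already carried out in Lemmas~\ref{lem:daf-to-da-correct} and~\ref{lem:da-wone-hard}. The only point that warrants an explicit sentence is the observation above that the correctness proof of $\tF$ never exploits the \emph{upper} bound on solution size in a way that could fail for exact sizes: it only ever needs a solution to be \emph{small enough} to exclude the forbidden gadget vertices, and that is precisely what both the ``at most $k$'' and the ``exactly $k$'' variants guarantee.
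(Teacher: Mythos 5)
Your proposal is correct and is essentially the paper's own (implicit) argument: the paper leaves this corollary to the observation that $\tF$ preserves the \emph{set} of solutions (Lemma~\ref{lem:daf-to-da-correct}) and that the tree decomposition construction of Lemma~\ref{lem:da-wone-hard} is oblivious to whether the size constraint is ``at most $k$'' or ``exactly $k$'', so composing with Corollary~\ref{cor:edaf-wone-hard} gives the result. One small inaccuracy in a side remark: the parenthetical claim that $\size{S'}<2k$ would already suffice to exclude the gadget vertices is false --- for a forbidden vertex $f$ with $\deg_G(f)\leq 1$ the set $\{f,f_1,\dots,f_k\}$ of size $k+1$ is a defensive alliance in $G'$ --- but this does not matter, since your main argument only ever invokes $\size{S'}\leq k$, which both the exact and non-exact variants guarantee.
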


\section{Conclusion}
\label{sec:conclusion}

In this work, we proved that the problem of deciding whether a given graph 
possesses a nonempty defensive alliance whose size is at most a given integer 
is \Wone-hard when the parameter is the treewidth of the graph.
This means that no fixed-parameter tractable algorithm exists under the common 
complexity-theoretic assumption $\Wone \neq \FPT$.
Still, recent work has shown the problem to be solvable in polynomial time on 
graphs of bounded clique-width~\cite{dam:KiyomiO17}, which implies that there 
is a polynomial-time algorithm for graphs of bounded treewidth.
Our result proves that, for any such algorithm, the degree of this polynomial 
must necessarily depend on the treewidth unless $\Wone = \FPT$.
For future research it may be interesting to study related problems 
corresponding to other alliance notions such as offensive alliances.


\bibliography{references}

\end{document}